\newtheoremstyle{compact}  
  {2pt}    
  {2pt}    
  {\normalfont}  
  {}       
  {\bfseries} 
  {.}      
  { }      
  {}       
\theoremstyle{compact}
\def\BibTeX{{\rm B\kern-.05em{\sc i\kern-.025em b}\kern-.08em
    T\kern-.1667em\lower.7ex\hbox{E}\kern-.125emX}}
\pgfplotsset{compat=newest}
\definecolor{IEEEblue} {rgb}{0.0, 0.384, 0.608}
\definecolor{IEEEgray}{rgb}{0.459, 0.471, 0.482}
\definecolor{IEEEblack}{rgb}{0.137, 0.122, 0.125}
\definecolor{Darkgray}{rgb}{0.314, 0.314, 0.314}
\definecolor{brown}{rgb}{0.396, 0.263, 0.129}
\definecolor{green}{rgb}{0.333, 0.420, 0.184}
\newtheorem{thm}{Theorem}[section]
\newtheorem{lem}[thm]{Lemma}
\newtheorem{prop}[thm]{Proposition}
\newtheorem{rem}{Remark}
\newtheorem{assum}{Assumption}
\renewcommand{\thethm}{\arabic{section}.\arabic{thm}} 
\gdef\parenineq{true} 
\newcommand{\ifequals}[4]{\ifthenelse{\equal{#1}{#2}}{#3}{#4}}
\newcommand{\casex}[2]{#1 #2} 
\newenvironment{switch}[1]{\renewcommand{\casex}{\ifequals{#1} } } {}
\newcommand{\enmath}[1]{\ensuremath{#1}} 
\newcommand{\continuousLabel}[1]{ \bar{#1}}
\newcommand{\disturbedLabel}[1]{{#1}_\mrm{d}}
\newcommand{\optLabel}[1]{{#1}^{*}}
\newcommand{\warmLabel}[1]{\tilde{#1}}
\newcommand{\leftright}[4]{ 
\begin{switch}{#4}
        \casex{in}{\enmath{#1#3#2}}{
        \casex{inline}{\enmath{#1#3#2}}{
        \casex{off}{\enmath{#1#3#2}}{
        \casex{false}{\enmath{#1#3#2}}{
        \enmath{\mathopen{} \left#1 #3 \right#2 \mathclose{} }
        }}}}
        \end{switch}
 }
\NewDocumentCommand{\paren}{O{in} m}{ \leftright{(}{)}{#2}{#1}}
\NewDocumentCommand{\brackets}{O{in} m}{ \leftright{[}{]}{#2}{#1}}
\NewDocumentCommand{\cbraces}{O{in} m}{ \leftright{\{}{\}}{#2}{#1}}
\NewDocumentCommand{\abs}{O{in} m}{ \leftright{\vert}{\vert}{#2}{#1}}
\NewDocumentCommand{\norm}{O{in} m}{ \leftright{\Vert}{\Vert}{#2}{#1}}
\NewDocumentCommand{\maxnorm}{O{in} m}{ \leftright{\Vert}{\Vert_{\infty}}{#2}{#1}}
\newcommand{\mrm}[1]{ \enmath{\mathrm{#1}} }
\newcommand{\mbf}[1]{ \enmath{\mathbf{#1}} }
\newcommand{\mbb}[1]{ \enmath{\mathbb{#1}} }
\newcommand{\mcal}[1]{ \enmath{\mathcal{#1}} }
\NewDocumentCommand{\set}{O{in} m m}{ \enmath{ \cbraces[#1]{#2\,\vert\, #3} } }
\newcommand{\funcArgs}[2]{
\begin{switch}{#1}
        \casex{in}{\gdef\parenineq{in}\xfuncArgs{#2}}{
        \casex{inline}{\gdef\parenineq{inline}\xfuncArgs{#2}}{
        \casex{off}{\gdef\parenineq{off}\xfuncArgs{#2}}{
        \casex{false}{\gdef\parenineq{false}\xfuncArgs{#2}}{
        \casex{eq}{\gdef\parenineq{eq}\xfuncArgs{#2}}{
        \casex{on}{\gdef\parenineq{on}\xfuncArgs{#2}}{
        \casex{true}{\gdef\parenineq{true}\xfuncArgs{#2}}{
        {\xfuncArgs[#1]}
        }}}}}}}
\end{switch}
\gdef\parenineq{true} 
}  
\newcommand{\xfuncArgs}[1]{
\begin{switch}{#1}
        \casex{\empty}{}{
        \casex{ }{}{
        \casex{.}{(\cdot)}{
        {\paren[\parenineq]{#1}}
        }}}
\end{switch}
}  
\newcommand{\funcArgsNorm}[2]{
\begin{switch}{#1}
        \casex{in}{\gdef\parenineq{in}\xfuncArgsNorm{#2}}{
        \casex{inline}{\gdef\parenineq{inline}\xfuncArgsNorm{#2}}{
        \casex{off}{\gdef\parenineq{off}\xfuncArgsNorm{#2}}{
        \casex{false}{\gdef\parenineq{false}\xfuncArgsNorm{#2}}{
        \casex{eq}{\gdef\parenineq{eq}\xfuncArgsNorm{#2}}{
        \casex{on}{\gdef\parenineq{on}\xfuncArgsNorm{#2}}{
        \casex{true}{\gdef\parenineq{true}\xfuncArgsNorm{#2}}{
        {\xfuncArgsNorm[#1]}
        }}}}}}}
\end{switch}
\gdef\parenineq{true} 
}  
\newcommand{\xfuncArgsNorm}[1]{
\begin{switch}{#1}
        \casex{\empty}{\paren[inline]{\norm[inline]{\cdot}}}{
        \casex{ }{\paren[inline]{\norm[inline]{\cdot}}}{
        \casex{.}{\paren[inline]{\norm[inline]{\cdot}}}{
        {\paren[\parenineq]{ \norm[\parenineq]{#1} }  }
        }}}
\end{switch}
}  
\NewDocumentCommand{\JN}{O{in} m}{ \enmath{ J_{N} \funcArgs{#1}{#2} } }
\NewDocumentCommand{\Jfc}{O{in} m}{ \enmath{ J_\mrm{f} \funcArgs{#1}{#2} } }
\NewDocumentCommand{\xell}{O{in} m}{ \enmath{ \ell \funcArgs{#1}{#2} } }
\NewDocumentCommand{\xellc}{O{in} m}{ \enmath{{\ell}_{x} \funcArgs{#1}{#2} } }
\NewDocumentCommand{\vellc}{O{in} m}{ \enmath{{\ell}_{v} \funcArgs{#1}{#2} } }
\NewDocumentCommand{\uellc}{O{in} m}{ \enmath{{\ell}_{u} \funcArgs{#1}{#2} } }
\NewDocumentCommand{\xsolv}{O{in} m}{ \enmath{ \varphi_{\deltat}^{v} \funcArgs{#1}{#2} } }
\NewDocumentCommand{\xsol}{O{in} m}{ \enmath{ \varphi_{\deltat} \funcArgs{#1}{#2} } }
\NewDocumentCommand{\xsolos}{O{in} m}{ \enmath{ \varphi^{\deltatos}_{N_{\os}} \funcArgs{#1}{#2} } }
\NewDocumentCommand{\xsolcl}{O{in} m}{ \enmath{ {\phi}_{\deltat} \funcArgs{#1}{#2} } }
\NewDocumentCommand{\xsolcld}{O{in} m}{ \enmath{ \check{\phi}_{\deltat} \funcArgs{#1}{#2} } }
\NewDocumentCommand{\VN}{O{in} m}{ \enmath{ V_{N} \funcArgs{#1}{#2} } }
\newcommand{\levJf}[1]{ \enmath{ \mrm{lev}_{#1} \Jfc{} } }
\newcommand{\levVN}[1]{ \enmath{ \mrm{lev}_{#1} \VN{} } }
\NewDocumentCommand{\xc}{O{in} m}{\enmath{ \continuousLabel{x} \funcArgs{#1}{#2} }}
\NewDocumentCommand{\vc}{O{in} m}{ \enmath{ \continuousLabel{v} \funcArgs{#1}{#2} } }
\NewDocumentCommand{\f}{O{in} m}{ \enmath{ f_{\deltat} \funcArgs{#1}{#2} } }
\NewDocumentCommand{\fos}{O{in} m}{ \enmath{ f_{\deltatos} \funcArgs{#1}{#2} } }
\NewDocumentCommand{\xmu}{O{in} m}{ \enmath{ \mu \funcArgs{#1}{#2} } }
\NewDocumentCommand{\xmuf}{O{in} m}{ \enmath{ \mu_\mrm{f} \funcArgs{#1}{#2} } }
\NewDocumentCommand{\xmud}{O{in} m}{ \enmath{ \disturbedLabel{\mu} \funcArgs{#1}{#2} } }
\NewDocumentCommand{\F}{O{in} m}{ \enmath{ F_{\deltat} \funcArgs{#1}{#2} } }
\NewDocumentCommand{\Fos}{O{in} m}{ \enmath{ F_{\deltatos} \funcArgs{#1}{#2} } }
\NewDocumentCommand{\Fc}{O{in} m}{ \enmath{ \continuousLabel{F} \funcArgs{#1}{#2} } }
\NewDocumentCommand{\fc}{O{in} m}{ \enmath{ f \funcArgs{#1}{#2} } }
\newcommand{\deltat}{ \enmath{ \Delta t} }
\newcommand{\deltatos}{ \enmath{ \delta t } }
\NewDocumentCommand{\alphaone}{O{in} m}{ \enmath{ \alpha_{1} \funcArgs{#1}{#2} } }
\NewDocumentCommand{\alphaonenorm}{O{in} m}{ \enmath{ \alpha_{1} \funcArgsNorm{#1}{#2} } }
\NewDocumentCommand{\alphatwo}{O{in} m}{ \enmath{ \alpha_{2} \funcArgs{#1}{#2} } }
\NewDocumentCommand{\alphatwonorm}{O{in} m}{ \enmath{ \alpha_{2} \funcArgsNorm{#1}{#2} } }
\NewDocumentCommand{\alphathree}{O{in} m}{ \enmath{ \alpha_{3} \funcArgs{#1}{#2} } }
\NewDocumentCommand{\alphathreenorm}{O{in} m}{ \enmath{ \alpha_{3} \funcArgsNorm{#1}{#2} } }
\NewDocumentCommand{\alphaell}{O{in} m}{ \enmath{ \alpha_{\xell{}} \funcArgs{#1}{#2} } }
\NewDocumentCommand{\alphaellnorm}{O{in} m}{ \enmath{ \alpha_{\xell{}} \funcArgsNorm{#1}{#2} } }
\NewDocumentCommand{\alphaJf}{O{in} m}{ \enmath{ \alpha_{\Jf} \funcArgs{#1}{#2} } }
\NewDocumentCommand{\alphaoneinv}{O{in} m}{ \enmath{ \alpha^{-1}_{1} \funcArgs{#1}{#2} } }
\NewDocumentCommand{\alphatwoinv}{O{in} m}{ \enmath{ \alpha^{-1}_{2} \funcArgs{#1}{#2} } }
\newcommand{\x}[1]{ \enmath{ x_{#1} } }
\newcommand{\xo}{ \enmath{ x_{0} } }
\newcommand{\xf}{ \enmath{ x_\mrm{f} } }
\newcommand{\omegaf}{ \enmath{ \omega_\mrm{f} } }
\newcommand{\vf}{ \enmath{ v_\mrm{f} } }
\newcommand{\xplus}{ \enmath{ x_{+} } }
\newcommand{\xd}[1]{ \enmath{\check{x}_{#1}}}
\newcommand{\xdplus}{ \enmath{\check{x}_{+}} }
\renewcommand{\u}[1]{ \enmath{ u_{#1} } }
\newcommand{\uopt}[1]{ \enmath{ \optLabel{u}_{#1} } }
\newcommand{\us}{ \enmath{ \mathbf{u} } }
\newcommand{\omegas}{ \enmath{ \boldsymbol{\omega} } }
\NewDocumentCommand{\omegasmu}{O{in} m}{ \enmath{ \omegas_{\mu} \funcArgs{#1}{#2} } }
\newcommand{\vs}{ \enmath{ \mathbf{v} } }
\NewDocumentCommand{\usopt}{O{in} m}{ \enmath{ \optLabel{\mathbf{u}} \funcArgs{#1}{#2} } }
\NewDocumentCommand{\uswarm}{O{in} m}{ \enmath{ \warmLabel{\mathbf{u}} \funcArgs{#1}{#2} } }
\newcommand{\uf}{ \enmath{ u_\mrm{f} } }
\newcommand{\ufi}{ \enmath{ {u_{\mrm{f}}}_{i} } } 
\newcommand{\mbbN}{\enmath{\mbb{N}}}
\newcommand{\mbbR}{\enmath{\mbb{R}}}
\newcommand{\mbbS}{\enmath{\mbb{S}}}
\newcommand{\mbbU}{\enmath{\mbb{U}}}
\newcommand{\mbbV}{\enmath{\mbb{V}}}
\newcommand{\mbbX}{\enmath{\mbb{X}}}
\newcommand{\terminalset}{\enmath{\mbbX_\mrm{f} }}
\newcommand{\mcalA}{\enmath{\mcal{A}}}
\newcommand{\mcalB}{\enmath{\mcal{B}}}
\newcommand{\mcalC}{\enmath{\mcal{C}}}
\newcommand{\mcalK}{\enmath{\mcal{K}}}
\newcommand{\mcalL}{\enmath{\mcal{L}}}
\newcommand{\mcalP}{\enmath{\mcal{P}}}
\newcommand{\mcalV}{\enmath{\mcal{V}}}
\newcommand{\feasibleset}{\enmath{\mathcal{X}_{N}}}
\newcommand{\feasiblesetc}{\enmath{\mathcal{X}_{\tf}}}
\NewDocumentCommand{\admissiblecontrolset}{O{in} m}{\enmath{{\mathcal{U}}_{N} \funcArgs{#1}{#2}}}
\DeclareMathOperator*{\argmax}{arg\,max}
\newcommand{\dt}{\enmath{\,\mrm{d}t } }
\newcommand{\dtau}{\enmath{\,\mrm{d}\tau } }
\newcommand{\tf}{\enmath{t_\mrm{f} } }
\newcommand{\os}{\enmath{\mrm{os}}}
\newcommand{\nv}{\enmath{n_{v}}}
\newcommand{\nx}{\enmath{n_{x}}}  
\newcommand{\leftlabel}[1]{%
  \makebox[0pt][l]{\hspace*{-\dimexpr\oddsidemargin+1in}#1}%
}
\begin{document}
\title{Fast Switching in Mixed-Integer Model Predictive Control}
\author{Artemi Makarow \orcidlink{0000-0002-0822-1807}, \IEEEmembership{Member, IEEE}, Christian Kirches \orcidlink{0000-0002-3441-8822}
\thanks{Funded by the European Union. The views and opinions expressed are those of the author(s) only and do not necessarily reflect those of the European Union or the European Research Council Executive Agency. Neither the European Union nor the granting authority can be held responsible for them. This work is supported by ERC grant SCARCE,
101087662.}
\thanks{The authors are with the Institute for Mathematical Optimization, Technische Universit\"at Braunschweig, 38106 Braunschweig, Germany, {\tt\small \{artemi.makarow,c.kirches\}@tu-bs.de}. }}

\maketitle
\thispagestyle{empty}
\pagestyle{empty}


\begin{abstract}                          

We deduce stability results for finite control set and mixed-integer model predictive control with a downstream oversampling phase.
The presentation rests upon the inherent robustness of model predictive control with stabilizing terminal conditions and techniques for solving mixed-integer optimal control problems by continuous optimization. 
Partial outer convexification and binary relaxation transform mixed-integer problems into common optimal control problems. 
We deduce nominal asymptotic stability for the resulting relaxed system formulation and implement sum-up rounding to restore efficiently integer feasibility on an oversampling time grid. 
If fast control switching is technically possible and inexpensive, we can approximate the relaxed system behavior in the state space arbitrarily close. 
We integrate input perturbed model predictive control with practical asymptotic stability. 
Numerical experiments illustrate practical relevance of fast control switching.

\end{abstract}

\begin{IEEEkeywords}
Model Predictive Control, Integer Approximation, Sum-Up Rounding, Practical Asymptotic Stability.
\end{IEEEkeywords}



\section{Introduction}

Formulating mixed-integer optimal control problems (MI-OCPs) for nonlinear dynamical systems significantly increases complexity compared to conventional OCPs. 
For practical applications, we can either transcribe a MI-OCP into a mixed-integer nonlinear program (MI-NLP) and apply the computational expensive branch-and-bound method as in \cite{Gerdts2005} or we follow the numerically challenging variable time transformation in \cite{Gerdts2006} that is based on a pre-defined switching sequence, see also \cite{Sager2007}. 
For a survey on reformulating and solving generic MI-OCPs, refer to \cite{Sager2009}. 

In this paper, we mainly focus on the integer approximation framework originally presented in \cite{Sager2007,Sager2010}. 
This framework applies three steps to generate an integer feasible but suboptimal solution to a MI-OCP: Partial outer convexification, relaxation, and integer reconstruction via sum-up rounding (SUR). 
Hence, we first transcribe a MI-OCP into a common OCP and then round its solution back into an integer feasible control trajectory in polynomial time. 
Sager et al.  \cite{Sager2010} show that the input and state approximation error are upper bounded and depend linearly on the largest sampling width. 
These dependencies on the maximum sampling width motivate fast switching. 
The tightest error bound for SUR follows from a dynamic programming argument \cite{Kirches2020}. 

Nonlinear mixed-integer model predictive control (MI-MPC) further lifts the complexity, as the question of stabilization now also arises. 
It is easy to imagine that stabilization of a steady-state for a nonlinear system with discrete actuators is a challenging task. 
However, Rawlings and Risbeck \cite{Rawlings2017} state by their Folk Theorem that the stability results for conventional MPC with stabilizing terminal conditions also hold for systems with continuous- and discrete-valued inputs. 
The key reason for this statement is that the input constraint set does not need to have an interior and thus permits some integer controls. 
\mbox{MI-MPC} continues to be a relevant topic, and the state-of-the-art is analyzed and discussed in \cite{McAllister2022}. 

Partial outer convexification can be used to transform every finite control set OCP (FCS-OCP) into a binary-integer OCP \cite{Sager2010}. 
In this case, MPC can only resort to integer (discrete-valued) controls for solving the stabilization task. 
The authors in \cite{Picasso2003,Aguilera2011,Aguilera2013} derive stabilizing properties of FCS-MPC for linear time-invariant systems based on robust control analysis and the construction of invariant sets. 
FCS-MPC is further employed in power electronics and is usually limited to a one-step horizon to satisfy real-time constraints with combinatorial optimization, see, e.g., \cite{Karamanakos2020}. 

We, on the other hand, aim to integrate the integer approximation framework due to \cite{Sager2010} with the inherently robust MPC as derived in \cite{Yu2014,Allan2017}. 
Our idea is to design nominal MPC in the relaxed domain and then determine maximum input perturbation bounds for robust control. 
The authors in \cite{Ebrahim2024} follow a similar idea, however, they derive stochastic tube-based MPC, where an additive disturbance term models the uncertainty induced by the SUR. 
An additional tracking controller is designed to robustly track the nominal and relaxed reference trajectories. 
Related to our idea is also the work in \cite{Ebrahim2020}. 
Here, the authors propose a computationally demanding bi-level approach for switching systems. A second auxiliary and variable time OCP is used to minimize the impact of the rounding decisions induced by the solution of the switching minimizing mixed-integer linear program due to \cite{Sager2011} on some first part of the relaxed and optimal control trajectory. 
Recursive feasibility only follows implicitly from a weak assumption that the terminal state constraint is always satisfied for every small input perturbation on the first part. 

Implementing integer approximation approaches in the context of MI-MPC is not a new idea, however, has not yet been addressed in the context of robust asymptotic stability. 
Recent publications on MI-MPC, relying on integer approximation strategies according to \cite{Sager2010,Sager2011}, solve real-world problems as, e.g., smart building heating \cite{Burda2023} or controlling refrigeration systems \cite{Ebrahim2018}. 
Other existing contributions in this field of research investigate efficient numerical realizations as in \cite{Kirches2011,Buerger2019,Chen2022}. 
The work in \cite{Kirches2011} addresses, inter alia, closed-loop stability for shrinking horizon MI-MPC that is based on the real-time-iteration scheme due to \cite{Diehl2005}.

\textit{Contributions}. In Section \ref{sec:problem_formulation}, we present our problem formulation and the transformation process from a finite control set, as it arises in   
FCS-OCPs or MI-OCPs, to continuous valued controls via partial outer convexification and relaxation, leading to a discrete-time OCP in the relaxed domain of the convex multiplier. 
At the end of Section~\ref{sec:problem_formulation}, we introduce a downstream temporal oversampling grid to reduce the state approximation error in the integer reconstruction phase.  
In Section \ref{sec:set_point_stabilization}, we exploit the inherent robustness properties of conventional MPC due to \cite{Allan2017} and adapt them to the case of input perturbations resulting from control rounding. 
As a result, we derive \mbox{$\mcalP$-practical} asymptotic stability according to \cite{Gruene2017}. 
Section \ref{sec:input_rounding} introduces SUR on the oversampling grid and shows that there always exists a temporal resolution for which we can ensure $\mcalP$-practical asymptotic stability. 
In Section \ref{sec:simulation}, we substantiate our claims with a numerical example and show practical relevance of fast switching. 

\textit{Notation}. Let $\mbbR^{n}$ denote the $n$-dimensional real vector space equipped with the Euclidean norm $\norm{\cdot}$. 
The set of positive real numbers containing zero is denoted by $\mbbR_{0}^{+}$. 
The set of all natural numbers $\mbbN$ including zero is denoted by $\mbbN_{0}$.
$\abs{\Omega}$ represents the cardinality of a finite set $\Omega$.
Let $[N]\coloneqq\cbraces{0,1,\hdots,N-1}$ with $N\in\mbbN$.
Let $X$, $U$, and $Y$ be some subsets of the Euclidean space.
$\mcalC^{k}(X,Y)$ denotes the space of $k$-times differentiable functions $f:X \to Y$. 
The level set of a function $f:X \to \mbbR_{0}^{+}$ with some finite level $c\geq0$ is denoted by $\mathrm{lev}_{c}f\coloneqq \set{x \in X}{f(x)\leq c}$.
Let $\mcalB_{\delta} \coloneqq \set{x\in X}{\norm{x}\leq \delta}$ be the closed ball of radius $\delta>0$. 
If a property holds everywhere except on a set of Lebesgue measure zero, it holds almost everywhere (a.e.).
The space of absolute continuous functions on the time interval $[0,\tf]$ mapping to $Y$, which are differentiable a.e., is denoted by $\mcalA \mcalC\paren{\brackets{0 , \tf},Y}$.
We denote the space of piecewise constant (control) functions on a uniform time grid by $\mcalP\mcalC\paren{\brackets{0 , \tf},U}:= \set{f:\brackets{0, \tf}\to U}{\exists\, \cbraces{u_{k}}_{k=0}^{N-1} \subset U: f(t)=u_{k}, \forall \,t \in [k\deltat,(k+1)\deltat), \deltat\coloneqq\tf/N}$. 
For $R\in{\mbbR_{0}^{+}}^{1 \times n}$, we define $\mathrm{ker} (R) \coloneqq\set{x\in[0,1]^{n}}{\sum_{i=1}^{n}x_{i}=1\wedge R\,x=0}$.
Let us further define the following classes of comparison functions: $\mcalK := \set{\alpha \in \mcalC^{0}(\mbbR_{0}^{+},\mbbR_{0}^{+}) }{\forall \,x_{1},x_{2} \in \mbbR_{0}^{+}\,(x_{1} < x_{2} \implies \alpha(x_{1})<\alpha(x_{2}) ), \alpha(0)=0  }$, $\mcalK_{\infty} \coloneqq \set{\alpha \in \mcalK}{\lim_{x\to \infty} \alpha(x) = \infty}$, $\mcalL:=\set{\lambda \in \mcalC^{0}(\mbbR_{0}^{+},\mbbR_{0}^{+})}{\forall \,x_{1},x_{2} \in \mbbR_{0}^{+}\, (x_{1} < x_{2} \implies \lambda(x_{1})>\lambda(x_{2}) ),\lim_{x\to\infty}\lambda(x)=0}$, $\mcalK \mcalL:=\set{\beta \in \mcalC^{0}(\mbbR_{0}^{+}\times \mbbR_{0}^{+},\mbbR_{0}^{+})}{\beta(\cdot,y)\in\mcalK,\beta(x,\cdot)\in \mcalL}$. 
The $j$th unit vector is defined by $\mbf{1}^{j}$. 
The indicator function $\mathds{1}_{[a,b)}(x) = 1$ if $x\in[a,b)$, otherwise it is equal to zero.


\section{Problem Formulation}
\label{sec:problem_formulation}

The following nonlinear ordinary differential equation defines the dynamical system of interest with state $\xc{t}\in X\coloneqq \mbbR^{\nx}$, control $\vc{t}\in V\coloneqq\mbbR^{\nv}$, and time $t\in\mbbR_{0}^{+}$:
\begin{equation}
\frac{\mathrm d \xc{}}{\dt}(t) =\fc[eq]{\xc{t},\vc{t}} , ~\xc{0}= x.
\label{eq:ivp}
\end{equation}
We consider input constraints indicated by the set $\mbbV\subset V$.  
Let $\vc{} \in \mcalV := \mcalP\mcalC\paren{\brackets{ 0, \tf},\mbbV}$ be a piecewise constant control function that is defined by the control sequence \mbox{$\cbraces{v_{k}}_{k=0}^{N-1}\subset \mbbV$}. 
The state trajectory $\xc{} \in \mcalA\mcalC\paren{\brackets{ 0, \tf},X}$ is governed by the continuous vector field $\fc{} \in\mcalC^{0}\paren{X \times V,X}$, the initial state $x \in X$, and follows from solving the initial value problem in~\eqref{eq:ivp}.
We request that the last state at time~$\tf$ is an element of some terminal set $\terminalset\subseteq X$. 
The set of all feasible initial states is therefore defined by: 
\begin{equation}
\feasiblesetc \coloneqq \set[eq]{x }{\exists \,\vc{} \in\mcalV : x+{\int_{0}^{\tf}} \fc{ { \xc{t}, \vc{t} } } \dt \in \terminalset}.
\end{equation}

\subsection{Time Discretization}
Let us define an equivalent discrete-time system for all $k\in[N]$ with $\deltat\coloneqq \tf/N$ and $\x{k}\coloneqq \xc{k\deltat}$ as follows:
\begin{equation}
\x{k+1}=\f{ { \x{k} , v_{k} } } \coloneqq \x{k} +\int_{k\deltat}^{(k+1)\deltat} \fc{ { \xc{t}, v_{k} } } \dt.
\label{eq:sys_disc}
\end{equation}
The recursion $\xsolv{k,x,\mbf{v}} \coloneqq \f{ {\xsolv{k-1,x,\vs},v_{k-1}} }$ for all $k\in\cbraces{1,2,\dots,N}$ with $\xsolv{0,x,\mbf{v}} \coloneqq x$ describes the evolution of the discrete-time system over a finite horizon $N$ using the stacked control sequence \mbox{$\vs\coloneqq[v_{0},v_{1},\hdots,v_{N-1}]\in\mbbV^{N}$}.

Let us consider continuous stage cost functions $\xellc{} \in \mcalC^{0}(X,\mbbR_{0}^{+})$, $\vellc{} \in \mcalC^{0}(V,\mbbR_{0}^{+})$ over the prediction horizon~$N$, and a terminal cost function $\Jfc{}\in \mcalC^{0}(X,\mbbR_{0}^{+})$:
\begin{equation}
I_{N}({x,\vs}) \coloneqq \sum_{k=0}^{N-1}l(\xsolv{k,x,\vs},v_{k})+\Jfc[eq]{ { \xsolv{N,x,\vs} } }, 
\label{eq:orig_cost}
\end{equation}
where $l(x_{k},v_{k})\coloneqq\xellc{x_{k}} +\vellc{v_{k}}$.
The following two common design Assumptions \ref{assum:steady-state} and \ref{assum:stage_cost_bounds} are essential for the upcoming set-point stabilization task \cite{Mayne2000,Rawlings2020}.

\begin{assum}[Steady-State Behavior]\label{assum:steady-state} 
For some steady-state $\paren{\xf,\vf}\in \feasiblesetc\times \mbbV$, we have that $\fc{\xf,\vf}=0$ (thus $\f{ { \xf , \vf } } =\vf$), $\xellc{\xf}=0$, $\vellc{\vf}=0$, and $\Jfc{\xf}=0$. 
\end{assum}
Without loss of generality, we set the steady-state tuple $\paren{\xf,\vf}$ to $\paren{0,0}$.

\begin{assum}[Stage Cost Bounds]\label{assum:stage_cost_bounds} 
There exist functions $\alpha_{\xellc{}},\alpha_{\vellc{}} \in\mcalK_{\infty}$ such that for all $x\in \feasiblesetc$ and $v \in \mbbV$, we have that $\alpha_{\xellc{}}(\norm{x}) \leq \xellc{x}$ and $\alpha_{\vellc{}}(\norm{v}) \leq \vellc{v}$.
\end{assum}

\subsection{Finite Control Set and Partial Outer Convexification}
We want to handle finite control sets of the form  $\Omega\coloneqq\cbraces{v^1,v^2,..., v^{\abs{\Omega} }} \subset\mbbV $ with cardinality $2\leq \abs{\Omega}<\infty$, where it is not relevant whether the elements of the set $\Omega$ themselves are continuous- or discrete-valued controls. Though we have a finite control set $\Omega$, we strive for
continuous optimization and therefore rely on the integral approximation framework presented in \cite{Sager2007,Sager2010}.
Integer approximation due to \cite{Sager2007,Sager2010, Kirches2020} provides efficient integer reconstruction in polynomial time with tightly bounded approximation errors.

The compact set of convex multipliers satisfying the special ordered set of type 1 (SOS1) is given by \cite{Sager2010,Kirches2020}: 
\begin{equation}
\mbbS^{\abs{\Omega}}:=  \set[in]{s\in \cbraces{0,1}^{\abs{\Omega}}}{\textstyle\sum_{i=1}^{\abs{\Omega}} s_{i} = 1 }.
\label{eq:mult_vectors}
\end{equation}
The SOS1 constraint establishes a bijection between the finite control set $\Omega$ and the convex multiplier set $\mbbS^{\abs{\Omega}}$.
We reformulate the system dynamics $\f{}$ by successively substituting all elements of the finite control set $\Omega$
and implementing convex multipliers $\omega_{k} \in \mbbS^{\abs{\Omega}}$  for all $k\in [N]$ with the continuous mapping $\F{}: X \to \mbbR^{\nx\times\abs{\Omega}}$ \cite{Sager2007}:
\begin{equation}
\F{ x_{k} }\,\omega_{k}\coloneqq \sum_{i=1}^{\abs{\Omega}} \f[eq]{ { x_{k}, v^{i} } }\,\omega_{k,i}=\f{x_{k},v_{k} }.
\label{eq:sys_disc_convex}
\end{equation}
Notice that the convex multipliers $\omega_{k}$ are binary-valued. To apply continuous optimization, we therefore apply convex hull relaxation and obtain the following compact set \cite{Sager2010,Kirches2020}:
\begin{equation}
\mbbU^{\abs{\Omega}}:=  \set[in]{u\in \brackets{0,1}^{\abs{\Omega}}}{\textstyle\sum_{i=1}^{\abs{\Omega}} u_{i} = 1 }.
\label{eq:relaxed_mult}
\end{equation}
The following Assumption \ref{assum:constraint_sets} is important to ensure the existence of an admissible solution and also addresses the stabilization of integer infeasible steady-states. 
\begin{assum}[Constraint Sets]\label{assum:constraint_sets}
The terminal set  $\terminalset \coloneqq \levJf{\pi}$ with some $\pi>0$ contains $\xf$ in its interior. 
The compact set $\mbbU^{\abs{\Omega}}$ contains $\uf$ for which that the steady-state condition $\F{ { \xf} }\,\uf = \xf$ holds.
\end{assum}
We also apply outer convexification and relaxation to the cost function $\vellc{}$ with $R \in  (\mbbR_{0}^{+})^{1\times\abs{\Omega}}$ and $\uellc{}:\mbbU^{\abs{\Omega}} \to \mbbR_{0}^{+}$:
\begin{equation}
\uellc{ u_{k} }\coloneqq R\,u_{k}\coloneqq\sum_{i=1}^{\abs{\Omega}}\vellc{ {v^{i} } } \,u_{k,i}.
\label{eq:linear_cost}
\end{equation}
Let $\us \coloneqq \brackets{\u{0},\u{1},\ldots,\u{N-1}}\in (\mbbU^{\abs{\Omega}})^{N}$ denote a stacked sequence of relaxed convex multipliers, which we shall consider as the new control variables.
The recursive solution to the reformulated system dynamics is defined by $\xsol{k,x,\us} \coloneqq  \F{\xsol[eq]{k-1,x,\us}}\,u_{k-1}$ for all $k\in\cbraces{1,2,\ldots,N}$ with $\xsol{0,x,\us} \coloneqq x$.
Let us define the set of all admissible control sequences and the resulting feasible state space by:
\begin{align}
\admissiblecontrolset[in]{x}&\coloneqq \set[eq]{\us \in \paren{\mbbU^{\abs{\Omega}}}^{N} }{\xsol[in]{N,x,\us}\in\terminalset},\\
\feasibleset &\coloneqq \set{x\in X}{\admissiblecontrolset{x} \neq \emptyset} \subset \feasiblesetc.
\end{align}

The overall finite horizon cost function is defined by:
\begin{equation}
\JN{x,\us} \coloneqq \sum_{k=0}^{N-1}\xell[eq]{ \xsol{k,x,\us},u_{k} }+\Jfc[eq]{ { \xsol{N,x,\us}}},
\end{equation}
with $\xell{x_{k},u_{k}}\coloneqq \xellc{x_{k}} +\uellc{u_{k}}$.
Now, consider the following discrete-time OCP:
\begin{equation}
\leftlabel{(DT-OCP)}\VN[in]{x} \coloneqq \min_{\us\,\in\,\admissiblecontrolset[in]{x}} \JN[in]{x,\us}.
\label{eq:dt-ocp}
\end{equation}
We assume that DT-OCP \eqref{eq:dt-ocp} is well-defined since the mappings $\us \mapsto \xsol{k,x,\us}$ and $\us \mapsto \JN{x,\us} $ are continuous and valid on the compact set $\admissiblecontrolset{x} \neq \emptyset$ (see, e.g., \cite[Prop. 2.4]{Rawlings2020}). 
We have that $\VN{}: \feasibleset \to \mbbR_{0^{+}}$.
The optimal solution to DT-OCP \eqref{eq:dt-ocp} is denoted by $\usopt{x} \coloneqq \brackets[in]{\uopt{0}(x),\uopt{1}(x),\ldots,\uopt{N-1}(x)}\in \admissiblecontrolset{x}$. 

\subsection{Null Space of $R$}
If $\vf \in \Omega$, then by bijection there is a $\omegaf \in \mbbS^{\abs{\Omega}}$ satisfying $\F{ {\xf } }\,\omegaf = \f{ { \xf, \vf } } = \xf$. By Assumptions~\ref{assum:steady-state} and~\ref{assum:stage_cost_bounds}, we also have that $\uellc{ {\omegaf } }=R\,\omegaf =\vellc{ {\vf } }=0$. 
Hence, $\omegaf =\mathbf{1}^j$ is the only SOS1 admissible sample in the null space of $R$, which is $\ker(R) = \cbraces{0,\epsilon \,\mathbf{1}^j}$ with $\epsilon \geq 0$  if $j$ indexes $\vf$. 

The relaxation step invalidates the previously described bijective mapping, such that a relaxed steady-state control vector $\uf \in \mbbU^{\abs{\Omega}}\setminus \mbbS^{\abs{\Omega}}$ has no inverse control vector $\vf \in \Omega$ anymore. 
In this case, we observe that \mbox{$\uellc{ {\uf } }=R\,\uf>0$}. However, since $\F{ {\xf } }\,\uf = \xf$ with $\uf \notin \mbbS^{\abs{\Omega}}$ might be feasible, e.g., for input affine systems, we shall perform a coordinate transformation to introduce a trivial steady-state. 
Let us redefine the stage cost function $\uellc{}$ with $ \xi_{k}\coloneqq u_{k}-\uf$ for all $k \in [N]$ as:
\begin{equation}
\uellc{u_{k}} \coloneqq R \, \abs[eq]{\xi_{k}} \coloneqq\sum_{i=1}^{\abs{\Omega}}\vellc{ {v^{i} } }\,\abs{u_{k,i}-\ufi}.
\label{eq:abs_mult}
\end{equation}
Now, we have that $\uellc{ {\uf } }=0$ applies for all $\uf \in \mbbU^{\abs{\Omega}}$.
Notice that $\sum_{i=1}^{\abs{\Omega}}\vellc{ {v^{i} } }\,\abs{u_{k,i}-\ufi}= \sum_{i=1}^{\abs{\Omega}}\vellc{ {v^{i} } }\,u_{k,i}$ if $\uf \in\mbbS^{\abs{\Omega}}$ since $R_{j}=0$. If $\uf \not\in\mbbS^{\abs{\Omega}}$, \eqref{eq:abs_mult} introduces a cost regularization term.

\subsection{Downstream Temporal Oversampling}
\label{sec:fast_sampling}
Let $\cbraces{v_{m}}_{m=0}^{N_{\os}}\subset \mbbV$ be the sequence of control vectors defining $\bar{v}_{\os} \in\mcalP\mcalC\paren{\brackets{ 0, \deltat},\mbbV}$ and $\deltatos \coloneqq \deltat/N_{\os}$. 
On the oversampling grid, we have the state vectors $\x{m}\coloneqq \xc{m\,\deltatos}$ and the discrete time models $\x{m+1}=\fos{ { \x{m} , v_{m} } }$ and $\Fos{x_{m} }\,\omega_{m}\coloneqq \sum_{i=1}^{\abs{\Omega}} \fos{ { x_{m}, v^{i} } }\,\omega_{k,i}=\fos{x_{m},v_{m} }$, which are defined analogously to \eqref{eq:sys_disc} and \eqref{eq:sys_disc_convex} for all $m\in[N_{\os}]$. 
Let $\omegas\coloneqq \brackets{\omega_{0},\omega_{1},\ldots,\omega_{N_{\os}-1}}\in (\mbbS^{\abs{\Omega}})^{N_{\os}}$ be a stacked sequence of integer-feasible convex multipliers. 
We denote the recursive solution to the outer convexified system dynamics on the oversampling grid by $\xsolos{m,x,\omegas} \coloneqq \Fos{\xsolos[eq]{m-1,x, \omegas}}\,\omega_{m-1}$ for all $m\in\cbraces{1,2,\ldots,N_{\os}}$ with $\xsolos{0,x,\omegas} \coloneqq x$.
If $\omegas = \brackets{u,u,\ldots,u}\in(\mbbU^{\abs{\Omega}})^{N_{\os}}$, we have that $\xsolos{N_{\os},x,\omegas} = \F{x}\,u$. 
The time grid with the step width $\deltat$ mainly determines the number of optimization variables in the transcribed version of DT-OCP~\eqref{eq:dt-ocp} and ultimately the computational load. 
We only introduce the finer time grid to perform rounding with frequent switching after the completion of the optimization process underlying DT-OCP~\eqref{eq:dt-ocp}. The objective of the downstream fast rounding phase is to reduce the integer approximation error.


\section{Practical Set-Point Stabilization}
\label{sec:set_point_stabilization}

Using the optimal solutions to DT-OCP in~\eqref{eq:dt-ocp} for closed-loop control of system \eqref{eq:sys_disc_convex} on time intervals $[n\deltat,(n+1)\deltat)$ with $n\in\mbbN_{0}$,  it operates as an autonomous system for which we need to ensure asymptotic stability properties. 
We aim to asymptotically stabilize the steady-state $\paren[in]{\xf ,\uf}$ in $\feasibleset$ for the following relaxed and autonomous system:
\begin{equation}
\x{n+1} = \F{\x{n}}\,\xmu{\x{n}} ,\,\xmu{\x{n}}\coloneqq\uopt{0}(\x{n}) \in \mbbU^{\abs{\Omega}}.
\label{eq:sys_cl}
\end{equation}
Let the recursive solution to system \eqref{eq:sys_cl} be denoted by $\xsolcl{n,x}$ for all $n\in\mbbN_{0}$ with $\xsolcl{0,x}\coloneqq x$.
We also want to transfer the binary-feasible and autonomous system 
 \begin{equation}
\xd{n+1} \coloneqq \xsolos{N_{\os},\xd{n},\omegasmu{\xd{n}}}, \omegasmu{\xd{n}} \in (\mbbS^{\abs{\Omega}})^{N_{\os}},
 \label{eq:sys_cl_d}
 \end{equation}
in some small neighborhood of the steady-state for all $\xd{n}\in \levVN{\rho}\subseteq \feasibleset$, where it shall remain for all time steps. 
Here, $\omegasmu{x}\coloneqq [{\omega_{\mu}}_{0}(x),{\omega_{\mu}}_{1}(x),\hdots,{\omega_{\mu}}_{N_{\os}-1}(x)] \in (\mbbS^{\abs{\Omega}})^{N_{\os}} $ denotes the state-dependent control law that shall be derived from $\xmu{x}\in\mbbU^{\abs{\Omega}}$ via integer reconstruction (see Sec. \ref{sec:input_rounding}) on the oversampling grid from \mbox{Section~\ref{sec:fast_sampling}}.
According to \cite{Allan2017,Rawlings2020}, we assume that $\levVN{\rho}$ with some $\rho>0$ is a compact sublevel set of the optimal value function $\VN{}$ that is contained in the closed feasible set $\feasibleset$.
Let the recursive solution to system \eqref{eq:sys_cl_d} be denoted by $\xsolcld{n,x}$ for all $n\in\mbbN_{0}$ with $\xsolcld{0,x}\coloneqq x$.
We initialize both closed-loop systems with some feasible initial state $\xd{0}=\xo=x\in\levVN{\rho}$. 

\subsection{Nominal Stability with Stabilizing Terminal Ingredients}
The following Assumption \ref{assum:terminal_control_law} ensures both recursive feasibility of the terminal set $\terminalset$ and asymptotic stability of the origin in $\terminalset$ for the closed-loop system \eqref{eq:sys_cl}. 

\begin{assum}[Control Invariant Terminal Set]\label{assum:terminal_control_law}
There exists a stabilizing terminal control law $\xmuf{}: \terminalset \to \mbbU^{\abs{\Omega}}$ such that for all $x\in\terminalset$ it holds that:
\begin{align}
\F{x}\,\xmuf{x} &\in \terminalset,\\
\Jfc{\F{x}\,\xmuf{x}} - \Jfc{x} &\leq -\xellc{x} - \uellc{\xmuf{x}}.
\end{align}
\end{assum}

Conventional MPC with stabilizing terminal conditions inherits the stabilizing properties of $\xmuf{}$ in $\terminalset$ and transfers them to the feasible state space $\feasibleset$ \cite{Rawlings2020,Mayne2000}. 
In \cite{Rawlings2020,Chen1998}, the authors present a possible procedure for determining a control invariant terminal set $\terminalset=\levJf{\pi}$.

\begin{prop}[Asymptotic Stability After Relaxation]\label{prop:nominal_stability}
Suppose Assumptions~\ref{assum:steady-state}--\ref{assum:terminal_control_law} hold. Then the optimal value function~$\VN{}$ is a Lyapunov function on the feasible set~$\feasibleset$:  
\begin{align}
&\alphaonenorm{x} \leq  \VN{x}\leq\alphatwonorm{x},~\alphaone{},\alphatwo{}\in\mcalK_{\infty},\\
&\VN{\F{x}\,\xmu{x}}  \leq \VN{x} -\alphaonenorm{x},~\forall\,x \in \feasibleset.
\end{align}
The origin is asymptotically stable in the positive invariant set~$\feasibleset$ for the relaxed system \eqref{eq:sys_cl}. There is a function $\beta \in \mcalK\mcalL$
such that for all $x \in \feasibleset$ and $n\in \mbbN_{0}$, we have that
$\norm{\xsolcl{n,x}}\leq \beta\paren{\norm{x},n}$.
\end{prop}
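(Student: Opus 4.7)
The plan is the textbook terminal-ingredient argument of \cite{Rawlings2020,Mayne2000}: show that the optimal value function $\VN{}$ is a Lyapunov function for the closed-loop system \eqref{eq:sys_cl} on $\feasibleset$, then convert the Lyapunov inequalities into the claimed $\mcalK\mcalL$-trajectory bound. The lower bound $\alphaonenorm{x}\leq \VN{x}$ with $\alphaone{}:=\alpha_{\xellc{}}$ is immediate from Assumption~\ref{assum:stage_cost_bounds}, since every admissible cost in \eqref{eq:dt-ocp} is bounded below by the initial stage cost $\xellc{x}\geq \alpha_{\xellc{}}\paren{\norm{x}}$.

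For the descent, I would take an optimal sequence $\usopt{x}=\brackets[in]{\uopt{0}(x),\hdots,\uopt{N-1}(x)}$ with predicted trajectory $\xopt{k}\coloneqq \xsol{k,x,\usopt{x}}$ and form the standard shifted warm-start $\uswarm{}\coloneqq \brackets[in]{\uopt{1}(x),\hdots,\uopt{N-1}(x),\xmuf{\xopt{N}}}$. Assumption~\ref{assum:terminal_control_law} guarantees $\F{\xopt{N}}\,\xmuf{\xopt{N}}\in \terminalset$, so $\uswarm{}$ is admissible at the successor $\F{x}\,\xmu{x}$; in particular $\feasibleset$ is positive invariant under \eqref{eq:sys_cl}. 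Substituting $\uswarm{}$ into $\JN{}$ and using the terminal-decrease inequality of Assumption~\ref{assum:terminal_control_law} to cancel the appended stage, I obtain $\VN{\F{x}\,\xmu{x}}\leq \JN{\F{x}\,\xmu{x},\uswarm{}}\leq \VN{x}-\xell{x,\uopt{0}(x)}\leq \VN{x}-\alphaonenorm{x}$, which is the descent claim with the same $\alphaone{}$.

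For the upper bound I would split the argument. On $\terminalset$, iterating Assumption~\ref{assum:terminal_control_law} under $\xmuf{}$ telescopes the accumulated stage cost against $\Jfc{}$ and gives $\VN{x}\leq \Jfc{x}$ for all $x\in\terminalset$; continuity of $\Jfc{}$ with $\Jfc{\xf}=0$ combined with compactness of $\terminalset$ (Assumption~\ref{assum:constraint_sets}) yields a local $\mcalK_{\infty}$-envelope. To propagate to the whole sublevel set $\feasibleset$, I would invoke continuity of $\VN{}$ — already asserted just below~\eqref{eq:dt-ocp} via \cite[Prop.~2.4]{Rawlings2020}, since $\F{}$, $\xell{}$, $\Jfc{}$ are continuous and $\admissiblecontrolset{x}$ is compact — together with $\VN{\xf}=0$ and compactness of $\feasibleset=\mathrm{lev}_{\rho}\VN{}$, from which some $\alphatwo{}\in\mcalK_{\infty}$ with $\VN{x}\leq \alphatwonorm{x}$ on $\feasibleset$ follows by the standard comparison-function construction.

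With the Lyapunov sandwich and the descent in place, the estimate $\norm{\xsolcl{n,x}}\leq \beta\paren{\norm{x},n}$ with $\beta\in\mcalK\mcalL$ follows from the canonical conversion of a discrete-time Lyapunov function into a $\mcalK\mcalL$-bound (e.g., \cite[Thm.~B.16]{Rawlings2020}). The main obstacle is precisely the global $\mcalK_{\infty}$-upper bound: the local bound on $\terminalset$ via $\Jfc{}$ is immediate, but extending it to all of $\feasibleset$ hinges on continuity of $\VN{}$ and compactness of $\mathrm{lev}_{\rho}\VN{}$; everything else — recursive feasibility, the descent step, and the final $\mcalK\mcalL$-conversion — is mechanical once Assumptions~\ref{assum:steady-state}--\ref{assum:terminal_control_law} are in force.
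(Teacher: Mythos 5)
Your route is the same one the paper takes: the paper's proof just verifies the hypotheses of the textbook result (continuity of $\xellc{}$, $\uellc{}$, $\Jfc{}$, $\F{}$; the steady-state conditions from Assumptions~\ref{assum:steady-state} and~\ref{assum:constraint_sets}; the stage-cost lower bound $\alphaell{}\in\mcalK_{\infty}$; and the local bound $\Jfc{x}\leq\alpha_{\Jfc{}}(\norm{x})$ on the compact terminal set) and then delegates the Lyapunov verification to \cite[Thm.~2.19~(a)]{Rawlings2020} and the $\mcalK\mcalL$ conversion to \cite[Thm.~2.13]{Rawlings2020}. Your lower-bound, recursive-feasibility, and descent steps coincide with what that citation unpacks to.

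The one step that does not hold up is your global upper bound. You propose to propagate the local bound from $\terminalset$ to all of $\feasibleset$ by invoking continuity of $\VN{}$ together with ``compactness of $\feasibleset=\levVN{\rho}$''. Two problems. First, the paper nowhere claims $\VN{}$ is continuous --- on the contrary, Section~\ref{sec:set_point_stabilization} states explicitly that there is no guarantee that the optimal value function is continuous in this nonlinear constrained setting, which is precisely why the later robustness analysis works with the continuous cost $\JN{}$ rather than with $\VN{}$. The remark below~\eqref{eq:dt-ocp} only asserts that the minimum is attained (existence of $\usopt{x}$), not that $x\mapsto\VN{x}$ is continuous; the latter would additionally require, e.g., continuity of the set-valued map $x\mapsto\admissiblecontrolset{x}$. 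Second, $\feasibleset$ is not $\levVN{\rho}$: that sublevel set is $\mcalY_{N}\subseteq\feasibleset$ and is introduced only in Section~\ref{sec:set_point_stabilization}, while $\feasibleset$ itself is closed but not assumed compact. The standard repair --- and the one implicit in the paper's citation --- needs only \emph{local boundedness} of $\VN{}$ together with $\VN{x}\leq\Jfc{x}\leq\alpha_{\Jfc{}}(\norm{x})$ on $\terminalset$, which contains the origin in its interior by Assumption~\ref{assum:constraint_sets}; this yields the $\mcalK_{\infty}$ upper bound without any continuity of $\VN{}$. Everything else in your proposal is sound and matches the paper.
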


\subsection{Inherent Robustness to Input Perturbations}

Restoring integer feasibility implies the need to round the relaxed control inputs. The deviation from the optimal control value can be interpreted as an input perturbation.
Before we specify the integer-feasible control law $\omegasmu{}$ more precisely, we want to integrate general input perturbed systems with the inherent robustness properties of nominal MPC with stabilizing terminal conditions due to \cite{Yu2014,Allan2017}.

The authors in \cite{Allan2017,Yu2014} show that nominal MPC is inherently robust to small state disturbances and estimation errors.
Allan et al. \cite{Allan2017} demonstrate robust asymptotic stability~(RAS) based on input-to-state stability~(ISS) for suboptimal MPC, which is founded on the continuous cost function $\JN{}$ and an extended state containing a systematically warm-started control sequence.
To capture the unknown outcome of the suboptimal solver, the authors analyze the evolution of the closed-loop system governed by a difference inclusion. 
However, the theoretical findings obtained under suboptimal conditions a general enough to cover the optimal MPC performance. 
Building on this observation and motivated by the set-point stabilization under plant-model mismatch, the authors in~\cite{Kuntz2025} deduce inherent robustness results in the presence of parameter errors. 
The authors employ the optimal value function $\VN{}$ as the Lyapunov candidate and note that their ISS and RAS results follow a special case of the inherent robustness results of suboptimal MPC in~\cite{Allan2017}.

To deduce an established closed-loop stability condition in our relaxed domain with minimal effort, we build upon the more constructive results in~\cite[Chap. 3.2.4]{Rawlings2020}, which are similar to those for continuous-time systems in~\cite{Yu2014}.
The authors show robust positive invariance and a sufficient cost decay to reach some small and positive invariant neighborhood of the steady-state in the presence of bounded disturbances. 
We formally integrate the derivations from~\cite{Allan2017,Rawlings2020} with the definition of $\mcalP$-practical asymptotic stability due to~\cite{Gruene2017}. 
We deduce $\mcalP$-practical asymptotic stability as a special case of the ISS/RAS results in \cite{Allan2017} and use an arbitrarily small level set $\levVN{\kappa}$ as the positive invariant neighborhood of the origin. 

\begin{prop}[Stability With Input Perturbations]\label{prop:practical_stability}
Suppose Assumptions \ref{assum:steady-state}--\ref{assum:terminal_control_law} hold. 
For every $ \kappa \in(0,\rho) $ with $\mcalP\coloneqq\levVN{\kappa}$, there exists a constant $\gamma>0$ such that if
\begin{equation}
\norm{\xsolos{N_{\os},x,\omegasmu{x}} -\F{x}\,\xmu{x}}\leq \gamma
\label{eq:practical_stability_state_bound}
\end{equation}
for all $x\in\levVN{\rho}\subseteq\feasibleset$, then $\levVN{\rho}$ and $\mcalP$ are positive invariant sets for system~\eqref{eq:sys_cl_d}. The optimal value function $\VN{}$ is a Lyapunov function on $ \levVN{\rho}\backslash\mcalP$. For all $x\in\levVN{\rho}\backslash\mcalP$, we have that:
\begin{align}
&\alphaonenorm{x} \leq  \VN{x}\leq\alphatwonorm{x},~\alphaone{},\alphatwo{},\alpha_{3}\in\mcalK_{\infty}, \\
&\VN{\xsolos{N_{\os},x,\omegasmu{x}}} \leq \VN{x} -\alphathreenorm{x}.
\end{align}
The origin is $\mcalP$-practically asymptotically stable in the positive invariant set $\levVN{\rho}$ for system \eqref{eq:sys_cl_d}. 
Therefore, there exists a function $\beta \in \mcalK \mcalL$ such that for all $x \in \levVN{\rho}$ and $n\in\mbbN_{0}$ with $\xsolcld{n,x}\not \in \mcalP$, we have that
$\norm{ \xsolcld{n,x}} \leq \beta\paren{\norm{x},n}$.
\end{prop}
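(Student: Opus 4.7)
The plan is to lift the nominal stability of Proposition~\ref{prop:nominal_stability} through the input perturbation caused by integer rounding and then invoke the $\mcalP$-practical asymptotic stability framework of \cite{Gruene2017}. The backbone is the standard warm-start cost-decrease inequality for MPC with stabilizing terminal ingredients, corrected by a continuity estimate obtained from Lemma~\ref{lem:lem_1}.

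First, I would establish the state-perturbation bound~\eqref{eq:practical_stability_state_bound}. The map $(x,\omegas)\mapsto \xsolos{N_{\os},x,\omegas}$ is a composition of the continuous $\Fos{}$ and is therefore continuous on the compact set $\levVN{\rho}\times (\mbbU^{\abs{\Omega}})^{N_{\os}}$. Since $\F{x}\,\xmu{x}=\xsolos{N_{\os},x,\brackets{\xmu{x},\ldots,\xmu{x}}}$ whenever $\deltatos=\deltat/N_{\os}$, Lemma~\ref{lem:lem_1} yields a $\sigma\in\mcalK_{\infty}$ with
\begin{equation*}
\norm{\xsolos{N_{\os},x,\omegasmu{x}}-\F{x}\,\xmu{x}}\leq \sigma\paren{\norm{\omegasmu{x}-\brackets{\xmu{x},\ldots,\xmu{x}}}}.
\end{equation*}
The rounding scheme of Section~\ref{sec:input_rounding} controls this input distance and drives it to zero as $N_{\os}\to\infty$, so $\gamma$ can be made arbitrarily small by refining the oversampling grid.

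Next, I would derive the Lyapunov decrease through the warm-start $\uswarm{x}$. Assumption~\ref{assum:terminal_control_law} closes the shifted tail cost and yields the textbook estimate $\JN{\F{x}\,\xmu{x},\uswarm{x}}\leq \VN{x}-\xell{x,\xmu{x}}\leq \VN{x}-\alphaellnorm{x}$, where the second step uses Assumption~\ref{assum:stage_cost_bounds} together with $\uellc{}\geq 0$. Applying Lemma~\ref{lem:lem_1} to the continuous map $y\mapsto \JN{y,\uswarm{x}}$ on $\levVN{\rho}$ produces an $\eta\in\mcalK_{\infty}$ with $\JN{\xsolos{N_{\os},x,\omegasmu{x}},\uswarm{x}}-\JN{\F{x}\,\xmu{x},\uswarm{x}} \leq \eta(\gamma)$, so by optimality $\VN{\xsolos{N_{\os},x,\omegasmu{x}}}\leq \VN{x}-\alphaellnorm{x}+\eta(\gamma)$. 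On $\levVN{\rho}\backslash\mcalP$, the upper bound $\VN{x}\leq\alphatwonorm{x}$ inherited from Proposition~\ref{prop:nominal_stability} forces $\norm{x}\geq \alphatwoinv{\kappa}$, so $\alphaellnorm{x}$ is bounded below by a positive constant $\delta>0$; choosing $N_{\os}$ large enough that $\eta(\gamma)\leq \tfrac12 \delta$ produces a strict decrease with $\alphathreenorm{x}\coloneqq \tfrac12\alphaellnorm{x}$ and thereby positive invariance of both $\levVN{\rho}$ and $\mcalP$.

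The main obstacle is verifying recursive feasibility of $\uswarm{x}$ at the \emph{perturbed} successor $\xsolos{N_{\os},x,\omegasmu{x}}$: applying a warm-start designed for $\F{x}\,\xmu{x}$ at a nearby state may push the tail terminal state slightly outside $\terminalset$. Since $\terminalset=\levJf{\pi}$ contains $\xf$ in its interior (Assumption~\ref{assum:constraint_sets}) and the nominal terminal state lies in $\terminalset$ by Assumption~\ref{assum:terminal_control_law}, I would work with a slightly tighter inner sublevel set $\levJf{\pi'}\subset\mathrm{int}(\terminalset)$ (shrinking $\rho$ if necessary) so that a further continuity estimate via Lemma~\ref{lem:lem_1} guarantees the perturbed terminal state remains in $\terminalset$ for sufficiently small $\gamma$. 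With $\VN{}$ validated as a Lyapunov function on $\levVN{\rho}\backslash\mcalP$ and the sandwich bounds inherited from Proposition~\ref{prop:nominal_stability}, the $\mcalK\mcalL$-estimate and $\mcalP$-practical asymptotic stability follow from the standard comparison-function argument of \cite[Thm.~2.19]{Rawlings2020} together with the definition in \cite{Gruene2017}.
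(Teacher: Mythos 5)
Your skeleton matches the paper's proof: shift-and-append warm start $\uswarm{x}$, Lemma~\ref{lem:lem_1} to convert the state perturbation $\norm{\xsolos{N_{\os},x,\omegasmu{x}}-\F{x}\,\xmu{x}}\leq\gamma$ into a cost perturbation $\alpha_{\JN{}}(\gamma)$ added to the nominal decrease, the lower bound $\norm{x}\geq\alphatwoinv{\kappa}$ on $\levVN{\rho}\backslash\mcalP$ to absorb that perturbation, and the practical-stability theorem of \cite{Gruene2017} to conclude. Two remarks on the parts that work: your decrease function $\alpha_{3}\coloneqq\tfrac12\alpha_{\ell}$ is actually cleaner than the paper's piecewise construction with the $\varepsilon$-shrunken margin, and your opening paragraph (driving $\gamma$ to zero by refining the oversampling grid) belongs to Theorem~\ref{thm:theorem_1}, not here --- in this proposition $\omegasmu{}$ is a generic integer-feasible law and the content is exhibiting a strictly positive threshold $\gamma$, not making the perturbation small.

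The genuine gap sits exactly where you flag ``the main obstacle'': recursive feasibility of $\uswarm{x}$ at the perturbed successor. You assert one can ``work with a slightly tighter inner sublevel set $\levJf{\pi'}$,'' but you never show that a single $\pi'<\pi$ works \emph{uniformly over} $x\in\levVN{\rho}$, and ``shrinking $\rho$ if necessary'' does not help: restricting the initial condition does not restrict where the optimal terminal state $\xsol{N,x,\usopt{x}}$ lands inside $\terminalset$, and it would also weaken the claim, which is stated for the fixed largest sublevel set $\levVN{\rho}\subseteq\feasibleset$. The uniform margin is the nontrivial step, and the paper obtains it by a case split with a parameter $\tau\geq1$: if $\Jfc{\xsol{N,x,\usopt{x}}}\geq\pi/\tau$, then $\norm{\xsol{N,x,\usopt{x}}}\geq\alpha_{\Jfc{}}^{-1}(\pi/\tau)$ and Assumption~\ref{assum:terminal_control_law} pushes the nominal warm-start terminal value at least $\alpha_{\ell}(\alpha_{\Jfc{}}^{-1}(\pi/\tau))$ below $\pi$; otherwise it is already below $\pi/\tau$. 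Optimizing over $\tau$ yields the strictly positive feasibility margin $\gamma_{1}$ in \eqref{eq:e_1}, which must enter your final threshold as $\gamma=\min\{\gamma_{1},\dots\}$ alongside the invariance and decrease margins. Without this (or an equivalent uniform-interiority argument), ``for sufficiently small $\gamma$'' could degenerate to $\gamma=0$ and the claim $\gamma>0$ would remain unproven.
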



\section{Control Input Rounding}
\label{sec:input_rounding}

The idea is to derive a suboptimal but integer feasible control sequence $\omegasmu{x}$ from the relaxed control law $\xmu{x}$ for all $x \in \levVN{\rho}\subseteq \feasibleset$. 
From \cite{Sager2010,Kirches2020}, we extract that the state approximation error is linearly dependent on the integrated control rounding error. 
To apply the results from \cite{Sager2010,Kirches2020}, we require mild assumptions about the original system properties.
 \begin{assum}[Lipschitz Continuity]
The vector field $f$ is Lipschitz continuous in its first argument on every compact subset of $\feasiblesetc$ with Lipschitz constant $L>0$.
\label{assum:lipschitz} 
\end{assum}
 \begin{assum}[Differentiability]
The continuous mapping $t\mapsto \fc{ { \xc{t}, v^{i} } }$ is differentiable a.e. and its derivative admits an upper bound $C>0$ such that
\begin{equation}
\norm[eq]{\frac{\mathrm{d}}{\dt} \fc[eq]{ { \xc{t}, v^{i} } }} \leq C
\end{equation}
holds for all $v^{i}\in\Omega $ and $t \in[0,\tf]$ a.e. with $\xc{t}\in\feasiblesetc$. 
\label{assum:continuous_diff} 
\end{assum}
 \begin{assum}[Boundedness]
The continuous mapping $t\mapsto \fc{ { \xc{t}, v^{i} } }$ is bounded by some $M>0$ such that 
\begin{equation}
\norm[eq]{ \fc[eq]{ { \xc{t}, v^{i} } }} \leq M
\end{equation}
\label{assum:boundedness} 
holds for all $v^{i}\in\Omega $ and $t\in[0,\tf]$ with $\xc{t}\in\feasiblesetc$. 
\end{assum}

 In the following, we tailor the continuous-time formulation in \cite{Sager2010,Kirches2020} to our discrete-time formulation. 
 \begin{prop}[Upper Approximation Bound, see \cite{Sager2010}]\label{prop:sur_state_bound}
Suppose that Assumptions \ref{assum:lipschitz}--\ref{assum:boundedness} hold. 
Assume that the control law $\xmud{x,m,t}\coloneqq\sum_{i=0}^{m-1}{\omega_{\mu}}_{i}(x)\, \mathds{1}_{[i\deltatos,(i+1)\deltatos)}(t)$ satisfies the following integral pseudo metric for all $x\in\feasibleset$ with $\sigma>0$:
\begin{equation}
\sup_{t\in[0,\deltat]} \norm[eq]{\int_{0}^{t} \xmu{x} - \xmud{x,N_{\os},\tau}\dtau}  \leq \sigma.
 \label{eq:sur_input_bound}
\end{equation}
Then for all $x\in\feasibleset$, we obtain that:
\begin{equation}
\norm{\xsolos{N_{\os},x,\omegasmu{x}} -\F{x}\,\xmu{x}}\leq (M+C\deltat) \,\sigma\, \mathrm{e}^{L\deltat}.
\label{eq:sur_state_bound}
\end{equation}
 \end{prop}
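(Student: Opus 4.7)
The plan is to work in continuous time on the interval $[0,\deltat]$, write both quantities being compared as endpoints of two continuous-time trajectories that share the same ODE but are driven by different piecewise constant controls, and bound their difference through an integration-by-parts argument followed by Gronwall's inequality, in the spirit of the integer approximation bounds in \cite{Sager2010,Kirches2020}. Let $\xi(\tau)$ denote the continuous-time solution of~\eqref{eq:ivp} starting at $x$ and driven by the constant relaxed multiplier $\xmu{x}$ through outer convexification, so that $\xi(\deltat)=\F{x}\,\xmu{x}$, and let $\zeta(\tau)$ denote the solution of the same ODE starting at $x$ and driven by the piecewise constant rounded control $\xmud{x,N_{\os},\tau}$, so that $\zeta(\deltat)=\xsolos{N_{\os},x,\omegasmu{x}}$. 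Writing $\Fc{y}u\coloneqq\sum_{i=1}^{\abs{\Omega}}\fc{y,v^{i}}u_{i}$, both trajectories satisfy $y(\tau)=x+\int_0^{\tau}\Fc{y(s)}\,u(s)\,\mathrm{d}s$ with their respective control.

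Subtracting these representations and adding and subtracting $\Fc{\xi(\tau)}\xmud{x,N_{\os},\tau}$ yields, for $t\in[0,\deltat]$,
\begin{equation*}
\zeta(t)-\xi(t)=\int_0^t\paren[eq]{\Fc{\zeta(\tau)}-\Fc{\xi(\tau)}}\xmud{x,N_{\os},\tau}\dtau+\int_0^t\Fc{\xi(\tau)}\paren[eq]{\xmud{x,N_{\os},\tau}-\xmu{x}}\dtau.
\end{equation*}
Assumption~\ref{assum:lipschitz} combined with $\xmud{x,N_{\os},\tau}\in\mbbS^{\abs{\Omega}}$ bounds the first integral in norm by $L\int_0^t\norm[in]{\zeta(\tau)-\xi(\tau)}\dtau$. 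For the second integral I would introduce $\Phi(\tau)\coloneqq\int_0^\tau\paren[eq]{\xmud{x,N_{\os},s}-\xmu{x}}\,\mathrm{d}s$ and perform an integration by parts componentwise. Hypothesis~\eqref{eq:sur_input_bound} supplies $\norm[in]{\Phi(\tau)}\leq\sigma$ uniformly on $[0,\deltat]$, Assumption~\ref{assum:boundedness} supplies $\norm[in]{\fc{\xi(\tau),v^{i}}}\leq M$, and Assumption~\ref{assum:continuous_diff} supplies $\norm[in]{\tfrac{\mathrm d}{\mathrm d\tau}\fc{\xi(\tau),v^{i}}}\leq C$ a.e. The boundary contribution at $\tau=t$ is then bounded by $M\sigma$, the boundary contribution at $\tau=0$ vanishes since $\Phi(0)=0$, and the residual integral is bounded by $C\deltat\,\sigma$, so the second summand has norm at most $(M+C\deltat)\,\sigma$.

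Merging both estimates produces
\begin{equation*}
\norm[in]{\zeta(t)-\xi(t)}\leq(M+C\deltat)\sigma+L\int_0^t\norm[in]{\zeta(\tau)-\xi(\tau)}\dtau,
\end{equation*}
and a single application of the Gronwall--Bellman inequality evaluated at $t=\deltat$ produces the claimed bound $(M+C\deltat)\,\sigma\,\mathrm{e}^{L\deltat}$.

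The step I expect to be the main obstacle is the integration-by-parts estimate on the control-difference term. Assumption~\ref{assum:continuous_diff} only guarantees that $\tfrac{\mathrm d}{\mathrm d\tau}\fc{\xi(\tau),v^{i}}$ is bounded almost everywhere, so I would have to verify absolute continuity of the maps $\tau\mapsto \fc{\xi(\tau),v^{i}}$ and package the componentwise bounds into a matrix estimate on $\Fc{\xi(\cdot)}$ that is consistent with the stated constants $M$ and $C$. The supremum form of~\eqref{eq:sur_input_bound} is sufficient to dominate the boundary term at $\tau=t$, but must be exploited uniformly on the interval. All other steps are routine continuity and Gronwall arguments. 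This structure also makes the role of downstream oversampling transparent: refining $\deltatos$ shrinks $\sigma$ via SUR without altering $M$, $C$, $L$, or $\deltat$, converting fast switching into a strictly tighter state approximation bound.
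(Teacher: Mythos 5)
Your proposal is correct and follows essentially the same route the paper relies on: the paper gives no proof of its own for this proposition, deferring entirely to \cite[Thm.~2]{Sager2010}, and your integration-by-parts-plus-Gronwall argument is precisely the standard proof of that cited result, with the $M$, $C\,\Delta t$, and $\mathrm{e}^{L\Delta t}$ factors arising exactly where you place them (boundary term, residual integral, Gronwall). The one point worth flagging is the identification $\xi(\Delta t)=F_{\Delta t}(x)\,\mu(x)$: under the paper's literal definition, $F_{\Delta t}(x)\,u=\sum_{i}f_{\Delta t}(x,v^{i})\,u_{i}$ is a convex combination of the endpoints of $|\Omega|$ distinct flows rather than the endpoint of the convexified ODE $\dot{\xi}=\sum_{i}f(\xi,v^{i})\,\mu_{i}(x)$ that your argument actually controls; this discrepancy is inherited from the paper's own appeal to the continuous-time theorem and is not something your proof introduces.
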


In Proposition~\ref{prop:practical_stability}, we claim that there exists some bound $\gamma>0$ for the state deviation caused by input perturbations that ensures $\mcalP$-practical asymptotic stability,  see 
\eqref{eq:practical_stability_state_bound}. 
In~\eqref{eq:sur_state_bound}, we specify an upper state error bound that formally depends on the integrated input error in \eqref{eq:sur_input_bound}. 
Notice that if $\sigma \to 0$, the state error bound in \eqref{eq:sur_state_bound} also tends to zero. However, the latter observation does not imply point-wise convergence of $\xmud{x,N_{\os},\cdot}$ and $\xmu{x}$ in time. Therefore, Proposition \ref{prop:sur_state_bound}  motivates systematic and fast switching in the context of input rounding such that the right-hand side of \eqref{eq:sur_state_bound} becomes smaller than $\gamma$ in \eqref{eq:practical_stability_state_bound}.
Notice that the upper state approximation bound in \eqref{eq:sur_state_bound} does not yet depend on the specific rounding algorithm, it mainly results from Grönwall's Lemma, see \cite[Thm. 2]{Sager2010}. 

\textit{Sum-Up Rounding (SUR)}.
In the following, we present possible and SOS1 admissible rounding algorithms to restore integer feasibility in polynomial time, namely simple rounding (SR) and sum-up rounding (SUR), see~\cite{Sager2010,Kirches2020}. Alternative optimization based rounding approaches, which are more complex and rely, inter alia, on integer linear programming, are presented in~\cite{Sager2011,Jung14,Bestehorn2019}.

Similar to \cite{Kirches2020}, we define control rounding as follows:
\begin{equation}
{\omega_{\mu}}_{m}(x)\coloneqq\mbf{1}^{i^{*}(m,x)},~\forall\,m\in [N_{\os}],
\label{eq:sur_control_law}
\end{equation}
where the index $i^{*}(m,x)$ is the solution to the problem:
\begin{equation}
i^{*}(m,x)\coloneqq\argmax_{i\in\cbraces{1,2,\ldots,\abs{\Omega}}} \cbraces[eq]{\eta^{\cbraces{\mathrm{SR,SUR}}}_{i}(m,x)}.
\label{eq:opt_index}
\end{equation}

Ties may be broken arbitrarily.
In the case of SR, we define $\eta^{\mathrm{SR}}_{i}(m,x)\coloneqq \mu_{i}(x) \in [0,1]$.
We detect the largest input value in each dimension $i\in\cbraces{1,2,\ldots,\abs{\Omega}}$ and apply SOS1 admissible rounding on the oversampling time grid. 
Since SR repeats its decision $N_{\os}$ times for piecewise constant controls, the error bound in~\eqref{eq:sur_input_bound} is $\sigma=\sigma^{\mathrm{SR}} \coloneqq N_{\mathrm{os}}\,\deltatos\, \sqrt{1-\abs{\Omega}^{-1}}$.

In case of SUR, the inner integral argument is defined by:
\begin{equation}
\eta^{\mathrm{SUR}}_{i}(m,x)\coloneqq \int_{0}^{(m+1) \deltatos} \mu_{i}(x)\dt - \int_{0}^{m \deltatos} {\mu_{\mathrm d}}_{i}(x,m,t)\dt.
\label{eq:integral_argument}
\end{equation}
In contrast to SR, we determine the largest input value in each dimension $i\in\cbraces{1,2,\ldots,\abs{\Omega}}$ and on each time interval $[0,(m+1)\deltatos)$ with $m\in [N_{\os}]$, taking into account all previous rounding decisions. 
The tightest upper rounding error bound for SUR in \eqref{eq:sur_input_bound} is given by \cite{Kirches2020}:
\begin{equation}
\sigma=\sigma^{\mathrm{SUR}}\coloneqq\sqrt{\abs{\Omega}}\,\deltatos\sum_{i=2}^{\min\cbraces{\abs{\Omega},N_{\mathrm{os}}+1}}\frac{1}{i}.
 \label{eq:sur_tightest_bound}
\end{equation}
The factor $\sqrt{\abs{\Omega}}$ describes the equivalence relation between the maximum norm used in~\cite{Kirches2020} and the Euclidean norm. 
The input approximation error for SUR reaches its maximum value after $\min\cbraces{\abs{\Omega},N_{\mathrm{os}}+1}$ rounding steps, see \cite[Thm. 6.1]{Kirches2020}. 
The upper bound in \eqref{eq:sur_tightest_bound} does not scale with the number of oversampling steps $N_{\os}$.
In contrast to SR, we obtain that $\sigma \to 0$ if $\deltatos \to 0$. 
The latter observation brings us to our main result.
\begin{thm}[Fast Switching in MI-MPC]
Suppose Assumptions~\ref{assum:steady-state}--\ref{assum:boundedness} hold. 
Assume $\omegasmu{x}$ follows from SUR and is defined by \eqref{eq:sur_control_law}--\eqref{eq:integral_argument} for all $x \in \levVN{\rho}\subseteq\feasibleset$. 
For every $\kappa \in (0,\rho)$ with $\mcalP\coloneqq\levVN{\kappa}$, there exists an oversampling step width $\deltatos>0$ such that  the origin is $\mcalP$-practically asymptotically stable in the positive invariant set $\levVN{\rho}$ for the input perturbed system \eqref{eq:sys_cl_d}.
\label{thm:theorem_1}
\end{thm}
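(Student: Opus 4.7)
The plan is to chain the three quantitative bounds already established in the paper: Proposition~\ref{prop:practical_stability} identifies a tolerated state--deviation radius~$\gamma$ guaranteeing $\mcalP$-practical asymptotic stability; Proposition~\ref{prop:sur_state_bound} converts a tolerated integrated control--rounding error~$\sigma$ into a bound on the state--deviation $\norm{\xsolos{N_{\os},x,\omegasmu{x}} -\F{x}\,\xmu{x}}$; and the SUR bound~\eqref{eq:sur_tightest_bound} relates~$\sigma$ to the oversampling step width~$\deltatos$. The entire proof then reduces to picking~$\deltatos$ small enough that the resulting SUR state--deviation sits below~$\gamma$.

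Concretely, I would proceed in four short steps. First, fix an arbitrary $\kappa\in(0,\rho)$ and invoke Proposition~\ref{prop:practical_stability} to extract a strictly positive tolerance $\gamma=\gamma(\kappa,\rho)>0$ for which \eqref{eq:practical_stability_state_bound} is sufficient for $\mcalP$-practical asymptotic stability with $\mcalP=\levVN{\kappa}$ and positive invariant outer set $\levVN{\rho}$. Second, appeal to Proposition~\ref{prop:sur_state_bound} and observe that because $\deltat$ is fixed by the MPC design, the factor $(M+C\deltat)\,\mathrm{e}^{L\deltat}$ is a finite constant; it suffices to enforce $\sigma \leq \gamma/\bigl((M+C\deltat)\,\mathrm{e}^{L\deltat}\bigr)$. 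Third, plug in the SUR rounding bound \eqref{eq:sur_tightest_bound}. The key observation here is that the harmonic tail satisfies
\begin{equation}
\sum_{i=2}^{\min\{\abs{\Omega},N_{\os}+1\}} \frac{1}{i} \leq \sum_{i=2}^{\abs{\Omega}} \frac{1}{i} \eqcolon H_{\abs{\Omega}}<\infty,
\nonumber
\end{equation}
so that $\sigma^{\mathrm{SUR}} \leq \sqrt{\abs{\Omega}}\,H_{\abs{\Omega}}\,\deltatos$ uniformly in $N_{\os}$. This yields the explicit sufficient condition
\begin{equation}
\deltatos \;\leq\; \frac{\gamma}{\sqrt{\abs{\Omega}}\,H_{\abs{\Omega}}\,(M+C\deltat)\,\mathrm{e}^{L\deltat}} \eqcolon \deltatos^{\star}>0.
\nonumber
\end{equation}
Fourth, choosing any $\deltatos \in (0,\deltatos^{\star}]$ such that $N_{\os}=\deltat/\deltatos \in \mathbb{N}$ (always possible, since the upper bound is strictly positive), the state--deviation bound \eqref{eq:sur_state_bound} is uniformly at most $\gamma$ for every $x\in\levVN{\rho}$, so the hypothesis of Proposition~\ref{prop:practical_stability} is met and $\mcalP$-practical asymptotic stability of the origin in $\levVN{\rho}$ follows for the closed-loop system~\eqref{eq:sys_cl_d}.

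The only substantive obstacle is subtle rather than technical: verifying that shrinking $\deltatos$ does not silently enlarge the relevant SUR bound through the coupled index $N_{\os}+1$ inside the harmonic sum. The uniform dominator $H_{\abs{\Omega}}$ resolves this issue and is precisely why SUR (as opposed to SR, whose bound $\sigma^{\mathrm{SR}} = N_{\os}\,\sqrt{\abs{\Omega}}\,\deltatos(1-\abs{\Omega}^{-1})$ does \emph{not} vanish with $\deltatos$) is the right rounding scheme for the fast-switching regime. All remaining ingredients---Lipschitz continuity and boundedness from Assumptions~\ref{assum:lipschitz}--\ref{assum:boundedness}, the stabilizing terminal conditions from Assumptions \ref{assum:steady-state}--\ref{assum:terminal_control_law}, and the positive invariance of $\levVN{\rho}$---feed directly into Propositions~\ref{prop:practical_stability} and~\ref{prop:sur_state_bound} and need no additional work here.
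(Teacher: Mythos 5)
Your proposal is correct and follows essentially the same route as the paper's proof: extract $\gamma>0$ from Proposition~\ref{prop:practical_stability}, chain it through the state-error bound \eqref{eq:sur_state_bound} and the SUR bound \eqref{eq:sur_tightest_bound}, and solve for an admissible $\deltatos$. Your explicit remarks that the harmonic tail is dominated uniformly in $N_{\os}$ and that $\deltat/\deltatos$ must be an integer are minor refinements the paper leaves implicit, but the argument is identical in substance.
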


\begin{proof}
Let $\xdplus\coloneqq \xsolos{N_{\os},x,\omegasmu{x}}$ and $\xplus\coloneqq \F{x}\,\xmu{x}$.
We combine \eqref{eq:sur_tightest_bound} with \eqref{eq:sur_state_bound} and obtain:
\begin{equation}
\norm[eq]{\xdplus-\xplus }\leq (M+C\deltat) \,\sqrt{\abs{\Omega}}\,\deltatos \textstyle\sum_{j=2}^{\abs{\Omega} }\frac{1}{j}\, \mathrm{e}^{L\deltat}.
\end{equation}
Now, we enforce the right-hand side to be smaller than the upper bound $\gamma>0$ from Proposition~\ref{prop:practical_stability}:
\begin{equation}
(M+C\deltat) \,\sqrt{\abs{\Omega}}\,\deltatos \textstyle\sum_{j=2}^{\abs{\Omega} }\frac{1}{j}\, \mathrm{e}^{L\deltat}\leq \gamma.
\end{equation}
Finally, we rearrange the inequality to $\deltatos$ and impose that:
\begin{equation}
\deltatos \leq\delta t_{\max}\coloneqq \frac{\gamma}{(M+C\deltat) \,\sqrt{\abs{\Omega}}\sum_{j=2}^{\abs{\Omega} }\frac{1}{j}\,\mathrm{e}^{L\deltat}}>0.
\label{eq:upper_bound_deltat}
\end{equation}
For every $\kappa \in (0,\rho)$, there exists an arbitrary small $\deltatos$ such that $\norm[eq]{\xdplus-\xplus }\leq \gamma$ holds for all $x\in\levVN{\rho}$.
$\mcalP$-practical asymptotic stability thus follows from Proposition~\ref{prop:practical_stability}.
\end{proof}

In theory, we can use an arbitrary small sampling width~$\deltatos$ to satisfy the upper bound $\delta t_{\max}$. 
In practice, there usually also exists a lower bound $ \deltatos_{\min}\in(0,\deltatos_{\max})$ that is determined by technical properties, such as transient reversal processes or safety functions. 
However, in electrical power systems, e.g., switching times can be assumed to be very small \cite{Karamanakos2020}.


\section{Numerical Experiments}
\label{sec:simulation}

We investigate the Van-der-Pol Oscillator with state dimension $\nx = 2$ and a nonlinear input with dimension $\nv=1$:
\begin{equation}
\fc[eq]{\xc{t},\vc{t}} =\begin{pmatrix}
\bar x_{2}(t)\\
\paren[eq]{1-\bar x_{1}^{2}(t)}\,\bar x_{2}(t)-\bar x_{1}(t) + \sin\paren[eq]{\vc{t}}
\end{pmatrix}.
\label{eq:vdp}
\end{equation}
Let us consider quadratic cost functions $\xellc{ x } = x^{\intercal}Qx$, $\vellc{v} = v^{\intercal}R_{v}v$,  and $\uellc{u} = \sum_{i=1}^{\abs{\Omega}}\vellc{ {v^{i} } }\,\paren[eq]{{u_{i}-\ufi}}^{2}$ with $\uf = (0.5,0.5)^{\intercal}$. The matrices $Q$ and $R_{v}$ denote positive definite weighting matrices.
The terminal cost function $\Jfc{x} = x^{\intercal}Px$ approximates the infinite-horizon costs of the nonlinear system, where $P\in \mbbR^{\nx \times \nx}$ is the positive definite solution to the discrete-time algebraic Riccati equation $A^{\intercal} P A - (A^{\intercal} P B) (\varrho W+B^{\intercal}PB)^{-1}(B^{\intercal}PA)+\varrho Q$ with $W\coloneqq \mathrm{diag}(R)=\mathrm{diag}(\vellc{v^1},\ldots,\vellc{v^{\abs{\Omega}}}).$
The pair $(A,B)$ denotes the stabilizable linearization of the relaxed system $\F{ x }\,u$ at the integer infeasible and unstable steady-state $(\xf,\uf) = (0,\uf)$. 
Without a proof, we choose $\pi = 0.3$ and $\varrho=1.001$ according to the setup procedure in \cite{Rawlings2020}. 
We define the finite control set as $\Omega= \cbraces{-1,1}\subset \mbbV = \set{v\in\mbbR}{\abs{v}\leq 1} $. 
With $Q=I$ and $R_{v}=1$, we have that $W=I$. 
For the time discretization, we choose $\deltat = 0.15\,\mathrm{s}$ and $N=20$.
We apply direct collocation (midpoint method) on the finest time grid with $\deltatos = 0.005\,\mathrm{s}$ to transcribe the DT-OCP in \eqref{eq:dt-ocp} into a NLP. However, for optimization, the control is always discretized on the coarse time grid with $\deltat = 0.15\,\mathrm{s}$.
The numerical benchmark setup is built upon the automatic differentiation and optimization framework CasADi~\cite{Andersson2019}, the general purpose solver IPOPT \cite{Waechter2005}, and the sparse linear solver MUMPS \cite{MUMPS:1}. 

\newlength{\figureheight} 
\newlength{\figurewidth}  

\begin{figure}[t]
	\centering
	\vspace{\topsep}
	\setlength\figureheight{0.75\columnwidth} 
	\setlength\figurewidth{0.95\columnwidth} 
\begin{tikzpicture}

\begin{axis}[
	width=0.36\figurewidth,
	height=0.32\figurewidth,
	axis y line = left,
	axis x line = bottom,
	xticklabel style = {/pgf/number format/assume math mode = true, /pgf/number format/fixed},
	yticklabel style = {/pgf/number format/assume math mode = true, /pgf/number format/fixed},
	at={(0\figurewidth,0.5\figurewidth)},
	scale only axis,
	separate axis lines,
	every outer x axis line/.append style={black},
	every x tick label/.append style={font=\color{black},font=\footnotesize},
	xmin=-0.05,
	xmax=0.505,
	xmajorgrids,
	xtick={-0.05,0,0.1,0.2,0.3,0.4,0.5},
	xticklabels={,0,0.1,0.2,0.3,0.4,0.5},
	x label style={font=\footnotesize},
	y label style={at={(-0.22,0.5)},font=\footnotesize},
	every outer y axis line/.append style={black},
	every y tick label/.append style={font=\color{black},font=\footnotesize},
	ymin=-0.31,
	ymax=0.15,
	line join=round,
	xlabel={${\phi_{\deltat_1}}(\cdot,x)$,~${\check\phi_{\deltat_1}}(\cdot,x)$},
	ylabel={${\phi_{\deltat_2}}(\cdot,x)$,~${\check\phi_{\deltat_2}}(\cdot,x)$},
	ymajorgrids,
        ytick={-0.3,-0.2,-0.1,0,0.1,0.15},
        yticklabels={-0.3,-0.2,-0.1,0,0.1,},
	legend columns = 3,
	legend entries={Relaxed, SUR $\deltatos = \deltat$ (SR), SUR $\deltatos = \deltat/10$,},
	legend style={at={(-0.1,1.02)},anchor=south west,legend cell align=left,align=left,draw=none,font=\footnotesize,fill=none}
	]

\addlegendimage{color=IEEEblack,solid,line width=1.2pt};
\addlegendimage{color=IEEEgray,solid,line width=1.2pt};
\addlegendimage{color=IEEEblue,solid,line width=1.2pt};

\addplot[color=IEEEgray,solid,line width=1pt]
table[]{content/data/sur_state_mb_1.txt};

\addplot[color=IEEEblue,solid,line width=1pt]
table[]{content/data/sur_state_mb_10.txt};

\addplot[color=IEEEblack,solid,line width=1pt]
table[]{content/data/nr_state.txt};

\end{axis}

\begin{axis}[
	width=0.36\figurewidth,
	height=0.32\figurewidth,
	axis y line = left,
	axis x line = bottom,
	xticklabel style = {/pgf/number format/assume math mode = true, /pgf/number format/fixed},
	yticklabel style = {/pgf/number format/assume math mode = true, /pgf/number format/fixed},
	at={(0.5\figurewidth,0.5\figurewidth)},
	scale only axis,
	separate axis lines,
	every outer x axis line/.append style={black},
	every x tick label/.append style={font=\color{black},font=\footnotesize},
	xmin=-0.05,
	xmax=0.1,
	xmajorgrids,
	xtick={-0.05,-0.025,0,0.025,0.05,0.075,0.1},
	xticklabels={-0.05,,0,,0.05,,0.1},
	x label style={font=\footnotesize},
	y label style={at={(-0.18,0.5)},font=\footnotesize},
	every outer y axis line/.append style={black},
	every y tick label/.append style={font=\color{black},font=\footnotesize},
	ymin=-0.1,
	ymax=0.05,
	line join=round,
        xlabel={${\phi_{\deltat_1}}(\cdot,x)$,~${\check\phi_{\deltat_1}}(\cdot,x)$},
	ymajorgrids,
	ytick={-0.1,-0.075,-0.05,-0.025,0,0.025,0.05},
	yticklabels={-0.1,,-0.05,,0,,0.05},
	legend columns = 0,
	legend entries={},
	]
	
\addplot[color=IEEEgray,solid,line width=0.8pt]
table[]{content/data/sur_state_mb_1.txt};

\addplot[color=IEEEblue,solid,line width=0.8pt]
table[]{content/data/sur_state_mb_10.txt};

\addplot[color=IEEEblack,solid,line width=1pt]
table[]{content/data/nr_state.txt};

\end{axis}

\begin{axis}[
	width=0.36\figurewidth,
	height=0.32\figurewidth,
	axis y line = left,
	axis x line = bottom,
	at={(0\figurewidth,0.0\figurewidth)},
	scale only axis,
	separate axis lines,
	every outer x axis line/.append style={black},
	every x tick label/.append style={font=\color{black},font=\footnotesize},
	xmin=-0.02,
	xmax=7.5,
	xmajorgrids,
	x label style={font=\footnotesize},
	y label style={at={(-0.22,0.5)},font=\footnotesize},
	every outer y axis line/.append style={black},
	every y tick label/.append style={font=\color{black},font=\footnotesize},
	ymin=-0.025,
	ymax=1.025,
	line join=round,
	xlabel={$n\,\deltat~[\mathrm s]$},
	ylabel={$\xmu{\x{n}},~\xmud{\xd{n},1,\cdot}$},
	ytick = {0,0.125,0.25,0.375,0.5,0.625,0.75,0.875,1,1.025},
	yticklabels = {$0$,,$0.25$,,$0.5$,,$0.75$,,$1$,},
	xtick = {0,1.25,2.5,3.75,5,6.25,7.5},
	xticklabels = {$0$,,$2.5$,,$5$,,$7.5$},
	ymajorgrids,
	legend image post style={xscale=0.5},
	legend columns = 2,
	legend entries={$ v^{1}=-1$,$v^{2}=+1$},
	legend style={at={(-0.03,0.97)},anchor=south west,legend cell align=left,align=left,draw=none,font=\scriptsize,fill=none}
	]

\addplot[forget plot,const plot, color=IEEEgray,solid,line width=1pt]
table[x index=1, y index = 2]{content/data/sur_control_mb_1.txt};

\addplot[forget plot, const plot, color=Darkgray,densely dashed,line width=1pt]
table[x index=1, y index = 3]{content/data/sur_control_mb_1.txt};

\addplot[const plot, color=IEEEblack ,solid,line width=1pt]
table[x index=1, y index = 2]{content/data/nr_control.txt};
	
\addplot[const plot, color=IEEEblack,densely dashed,line width=1pt]
table[x index=1, y index = 3]{content/data/nr_control.txt};

\end{axis}

\begin{axis}[
	width=0.36\figurewidth,
	height=0.32\figurewidth,
		xticklabel style = {/pgf/number format/assume math mode = true, /pgf/number format/fixed},
	yticklabel style = {/pgf/number format/assume math mode = true, /pgf/number format/fixed},
	axis y line = left,
	axis x line = bottom,
	at={(0.5\figurewidth,0\figurewidth)},
	scale only axis,
	separate axis lines,
	every outer x axis line/.append style={black},
	every x tick label/.append style={font=\color{black},font=\footnotesize},
	xmin=0,
	xmax=10,
	xmajorgrids,
	x label style={font=\footnotesize},
	y label style={at={(-0.15,0.5)},font=\footnotesize},
	every outer y axis line/.append style={black},
	every y tick label/.append style={font=\color{black},font=\footnotesize},
	ymin=-0.02,
	ymax=5,
	line join=round,
	xtick={0,2.5,5,7.5,10,12.5,15},
	ytick={0,1,2,3,4,5},
        xlabel={$n\,\deltat~[\mathrm s]$},
	ylabel={$\VN{\x{n}},\,\VN{\xd{n}}$},
	ymajorgrids,
	legend columns = 1,
	legend entries={$\deltatos = \deltat$, $\deltatos = \deltat/2$,$\deltatos = \deltat/5$,$\deltatos = \deltat/10$,Relaxed},
	legend style={at={(0.2,0.28)},anchor=south west,legend cell align=left,align=left,draw=none,font=\scriptsize,fill=none}
	]

\addplot[color=IEEEgray,solid,line width=1pt]
table[x index=1, y index = 2]{content/data/sur_costs_mb_1.txt};

\addplot[color=green,solid,line width=1pt]
table[x index=1, y index = 2]{content/data/sur_costs_mb_2.txt};

\addplot[color=brown,solid,line width=1pt]
table[x index=1, y index = 2]{content/data/sur_costs_mb_5.txt};

\addplot[color=IEEEblue,solid,line width=0.8pt]
table[x index=1, y index = 2]{content/data/sur_costs_mb_10.txt};
	
\addplot[color=IEEEblack,dashed,line width=0.8pt]
table[x index=1, y index = 2]{content/data/nr_costs.txt};
	
\end{axis}

\end{tikzpicture}%
	\caption{Set-point stabilization for systems \eqref{eq:sys_cl} and 
	\eqref{eq:sys_cl_d} with SUR. 
	Top left: Phase portrait. Top right: Close-up view of left plot. 
	Bottom left: Closed-loop control trajectories of systems \eqref{eq:sys_cl} and 
	\eqref{eq:sys_cl_d} with $\deltatos=\deltat$.
	Bottom right: Evolutions of the optimal value function $\VN{}$ for different sampling times $\deltatos$. }
	\label{fig:figure}
	\vspace{-\topsep}
\end{figure}

In  Figure~\ref{fig:figure}, the implicit control law $\xmu{}$ transfers, as expected, the relaxed system \eqref{eq:sys_cl} from $x = (0.5,0)^{\intercal}$ to the origin and then stabilizes the steady-state $((0,0)^{\intercal},(0.5,0.5)^{\intercal})$ according to Proposition~\ref{prop:nominal_stability}. 
Note that for $\deltatos = \deltat$, SUR equals SR, taking a single rounding decision on the interval $\deltat$ (see bottom left plot of Fig.~\ref{fig:figure}). 
The perturbed control law $\xmud{}$ distracts the relaxed system~\eqref{eq:sys_cl_d} at an early stage in time and evolves into a oscillating state space behavior.
The evolution of the optimal value function in the bottom right plot of Figure~\ref{fig:figure} reveals that costs do not decrease in the sense of Lyapunov. 

For SUR with $\deltatos=\deltat/10=0.015\,\mathrm{s}$, we observe a state trajectory that is on average much more similar to the relaxed state trajectory.
The close-up of the state-space in the upper right plot shows that the relaxed system remains inside some small neighborhood of the origin for all time steps. 
Notice that for $\deltatos=\deltat/10$, the evolution of the optimale value function nearly resembles the evolution of the relaxed optimal value function, which decreases in the sense of Lyapunov. 

Let us consider Table \ref{tab:table}, where we evaluate the maximum control integer gap $\sigma_{\max}\coloneqq\max_{n\in[120]}\cbraces{ \norm{\int_{0}^{\deltat} \xmu{\xd{n}} - \xmud{\xd{n},N_{\os},t}\dt}}$ and the maximum state gap $\gamma_{\max}\coloneqq \max_{n\in [120]}\norm{\xsolos{N_{\os},\xd{n},\omegasmu{\xd{n}}} -\F{\xd{n}}\,\xmu{\xd{n}}}$ for the numerical setup as in Figure~\ref{fig:figure}. 
We define the relative run time $t_{r} \in (0\,\%,100\,\%]$ as the quotient of the minimum run time of SUR and the minimum run time of the optimization over 120 steps. 
We apply warm-starting of primal und dual variables.

The finer the time resolution of the oversampling grid, the smaller is the control integer gap $\sigma_{\max}$ for SUR. 
This observation would not hold for SR as it simply repeats its decision.
For SUR, the state and control integer gaps almost vanish for $\deltatos =\deltat/30$ according to \eqref{eq:sur_state_bound} and \eqref{eq:sur_tightest_bound}. Hence, we know that for $\deltatos \to 0$, we have that $\gamma \to 0$ and thus also $\kappa \to 0$. For $\deltatos=\deltat/30$, the resulting state trajectory is nearly identical to that of the relaxed trajectory since $\gamma_{\max}=0.0024$. 
Therefore, we do not plot this graph in Figure~\ref{fig:figure}. 
These numeric results support our claim in Theorem~\ref{thm:theorem_1}.
The relative run time evaluation confirms a negligible amount of time for integer reconstruction after optimization with $t_{\mathrm r} < 1\,\%$.

\begin{rem}[Quadratic Convex Multipliers]
Squaring the control variables contradicts the linear cost term in~\eqref{eq:linear_cost}, which is important to maintain the bijective mapping to $l$ in \eqref{eq:orig_cost}. Therefore, we loose optimality claims with respect to the originally outer convexified problem~\eqref{eq:dt-ocp}, even if SUR would enable an error-free reconstruction. 
Nevertheless, since we choose an integer-infeasible steady-state, we have that $\uf\not\in\mbbS^{\abs{\Omega}}$. We thus have to rely on the cost regularization term in~\eqref{eq:abs_mult} and not in~\eqref{eq:linear_cost}.
Since we focus on closed-loop stability and assume that fast switching is technically possible and inexpensive, we do not have to ensure the bijective cost mapping.
It is sufficient to have any relaxed and asymptotically stable closed-loop system evolution $\xsolcl{\cdot,x}$ to which we can attach the evolution of the input perturbed closed-loop system $\xsolcld{\cdot,x}$.
\end{rem}


\section{Conclusion}

\begin{table}[t]
\vspace{\topsep}
    \caption{Maximum control integer and state gaps with SUR.}
    \centering
    \begin{tabular}{ccccccc} 
        \toprule
          & $\deltatos=$& $\deltat$ & $\deltat/2$ & $\deltat/5$ & $\deltat /10$& $\deltat /30$ \\ 
        \midrule
         \multirow{2}{*}{SUR}  &$\sigma_{\max}$  & 0.1059  & 0.0525  & 0.0207 & 0.0105 & 0.0035\\
                                           &$\gamma_{\max}$ & 0.1036  & 0.0411  & 0.0173 & 0.0113 & \textbf{0.0024}\\
                                           &$t_{\mathrm{r}}[\%]$ & 0.1563  & 0.1870  & 0.1875 & 0.2444 & \textbf{0.2856}\\
        \bottomrule
    \end{tabular}
    \vspace{-\topsep}
    \label{tab:table}
\end{table}

We have formally deduced $\mcalP$-practical asymptotic stability for mixed-integer model predictive control. To achieve this stability results, we have fused the findings on inherent robustness of conventional model predictive control (closed-loop) with stabilizing terminal conditions due to \cite{Mayne2000,Allan2017,Rawlings2020} with the integer approximation algorithm for mixed-integer optimal control (open-loop) due to \cite{Sager2007,Sager2010} and its theoretical properties in \cite{Sager2010,Kirches2020} on the boundedness of the approximation errors. 

After applying partial outer convexification and relaxation, we obtain relaxed system and cost formulations. This relaxed optimal control problem serves as the foundation for providing a desired stabilizing state space behavior for the real and integer feasible system. 
We have introduced sum-up rounding of the implicit control law on an oversampled temporal grid. 
If we assume that switching is inexpensive and technically possible, we have shown theoretically and with a numerical example that the approximation error due to rounding tends to zero the faster we switch the control input. 
This approximation property is especially interesting for applications with fast dynamics such as power electronics. 


\begin{appendix}
\label{sec:appendix}

\renewcommand{\thethm}{A\arabic{thm}}

\subsection{Proof of Proposition~\ref{prop:nominal_stability}}

\begin{proof}
The functions $\xellc{}$, $\uellc{}$, $\Jfc{}$, and $\F{}$ are assumed to be continuous. 
Assumptions \ref{assum:steady-state}, \ref{assum:constraint_sets}, and \eqref{eq:abs_mult} ensure proper steady-state conditions $\F{\xf}\,\uf = \xf$ and $\ell(\xf,\uf)=0$. 
Since $\uellc{u} \geq0$ for all $u \in\mbbU^{\abs{\Omega}}$ and due to Assumption~\ref{assum:stage_cost_bounds}, there exists a function $\alpha_{1}=\alphaell{}\in\mcalK_{\infty}$ such that for all $x\in\feasibleset\subset\feasiblesetc$ and $u\in\mbbU^{\abs{\Omega}}$, we have that $ \alphaellnorm{x} \leq \xellc{x} \leq \ell(x,u)$. 
Since $\Jfc{}$ is continuous at the origin and locally bounded on the closed set $\terminalset=\levJf{\pi}$, there exists a function $\alpha_{\Jfc{}}\in\mcalK_{\infty}$ such that for all $x \in\terminalset$, we have that $\Jfc{x}\leq \alpha_{\Jfc{}}(\norm{x})$ \cite{techRawlings2017,Allan2017}.
Together with Assumptions~\ref{assum:constraint_sets} and \ref{assum:terminal_control_law}, we have all the common stabilizing conditions together to rely on the textbook proof \cite[Thm. 2.19 (a)]{Rawlings2020} to verify that $\VN{}$ is a valid Lyapunov function.
From a standard result \cite[Thm. 2.13]{Rawlings2020}, we deduce asymptotic stability.
\end{proof}

\subsection{Proof of Proposition~\ref{prop:practical_stability}}

The following derivations are mainly based on \cite{Allan2017,Rawlings2020}. To evaluate the difference between the evolutions of the nominal and the input perturbed systems in \eqref{eq:sys_cl} and \eqref{eq:sys_cl_d}, respectively, we rely on Lemma \ref{lem:lem_1}, wich is taken from \cite[Prop. 20]{Allan2017}.
\begin{lem}[See \cite{Allan2017}]
Defne $C  \subseteq D \subseteq X$ with $C$ compact and $D$ closed. 
Let $g\in\mcalC^{0}(D,X) $. 
Then there is a function $\alpha \in \mathcal{K}_{\infty}$ such that for all $x \in D$ and $y \in C$, we have that $\norm[in]{g(x)-g(y)} \leq \alpha(\norm[in]{x-y})$.
\label{lem:lem_1}
\end{lem}

Let us define the warm-start control sequence by
$\uswarm{x} = \brackets[eq]{\uopt{1}(x),\uopt{2}(x),\ldots,\uopt{N-1}(x), \xmuf{\xsol{N,x,\usopt{x}}}}$.
Let $\xdplus\coloneqq \xsolos{N_{\os},x,\omegasmu{x}}$ and $\xplus\coloneqq \F{x}\,\xmu{x}$.
\vspace{0.5\topsep}
\begin{proof}
The bounds $\alpha_{1}$ and $\alpha_{2}$ follow from Proposition~\ref{prop:nominal_stability} and rely on Assumptions~\ref{assum:steady-state}--\ref{assum:constraint_sets}. 
To proof $\mcalP$-practical asymptotic stability, we first combine the derivations in \cite{Allan2017} (robust asymptotic stability, suboptimal MPC) and \cite[Chap. 3.2.4]{Rawlings2020} (robust exponential cost decrease, optimal MPC). 
Then, we derive a new cost decay bound $\alpha_{3}\in\mcalK_{\infty}$. 

Let us consider any $\rho\geq0$ such that $\levVN{\rho}\subseteq \feasibleset$.
To rely on Lemma~\ref{lem:lem_1}, $\levVN{\rho}$ must be compact.
Since $J_{N}$ is lower-bounded by $\alpha_{\xellc{}} \in \mcalK_{\infty}$ for all $(x,\us) \in \feasibleset \times \mbbU^{\abs{\Omega}}$ where $\mbbU^{\abs{\Omega}}$ is compact, see Assumptions~\ref{assum:stage_cost_bounds}--\ref{assum:constraint_sets}, it is level-bounded in $x$ uniformly over $\us \in \mbbU^{\abs{\Omega}}\supseteq \admissiblecontrolset{x}$.  
Further, since $J_{N}$ is assumed to be continuous, $V_{N}$ is lower semicontinuous by \cite[Thm.\,1.17, Ex.\,1.19]{Rockafellar1998}.
We conclude that $\levVN{\rho}$ is closed and since $\norm{x}\leq \alpha_{\xellc{}}^{-1}(\rho)$ for all $x\in\levVN{\rho}$, $\levVN{\rho}$ is bounded and thus compact, see also \cite[Prop.\,9]{Kuntz2025}.

\subsubsection{\textbf{Terminal Constraint Satisfaction}}
We have continuous mappings $\xdplus\mapsto\xdplus[N]\coloneqq\xsol{N,\xdplus,\uswarm{x}}$ and $\xplus\mapsto\xplus[N]\coloneqq\xsol{N,\xplus,\uswarm{x}}$. Based on Lemma \ref{lem:lem_1}, for all $x, \xplus \in \levVN{\rho}$ and $\xdplus \in\feasibleset$, we have that (see \cite{Allan2017}):
\begin{equation}
\norm[eq]{\Jfc[eq]{\xdplus[N]} - \Jfc[eq]{\xplus[N]}}\leq \alpha_{\Jfc{}}\paren[eq]{\norm[eq]{\xdplus-\xplus}},\,\alpha_{\Jfc{}}\in\mcalK_{\infty}.
\end{equation}
Let us consider the upper bound for $\Jfc[eq]{\xdplus[N]}$ (see \cite{Allan2017}):
\begin{equation}
\Jfc[eq]{\xdplus[N]} \leq \Jfc[eq]{\xplus[N]}+ \alpha_{\Jfc{}}\paren[eq]{\norm[eq]{\xdplus-\xplus}}.
\end{equation}
After inserting the nominal cost decrease from Assumption~\ref{assum:terminal_control_law} with $x^{*}[N]\coloneqq\xsol{N,x,\usopt{x}}$, we obtain (see \cite{Allan2017}):
\begin{equation}
\Jfc[eq]{\xdplus[N]} \hspace{-0.1em}  \leq \Jfc[eq]{x^{*}[N]}-\alphaellnorm[eq]{x^{*}[N]}+\alpha_{\Jfc{}}\paren[eq]{ \norm[eq]{\xdplus - \xplus } }.
\end{equation}
If $\alpha_{\Jfc{}}\paren{\norm{x^{*}[N]}} \geq \Jfc{x^{*}[N]}\geq \pi/\tau$ with some $\tau \in \mbbR_{\geq 1}$, we have that $\norm{x^{*}[N]}\geq \alpha_{\Jfc{}}^{-1}\paren{\pi/\tau}$. The upper bound thus follows by (see \cite{Allan2017}):
\begin{equation}
\Jfc[eq]{\xdplus[N]} \hspace{-0.1em}  \leq \pi - \alphaell[eq]{\alpha_{\Jfc{}}^{-1}\paren[eq]{\frac{\pi}{\tau}} }+\alpha_{\Jfc{}}\paren[eq]{ \norm{\xdplus - \xplus } }.
\end{equation}
If $\Jfc[eq]{x^{*}[N]}< \pi/\tau$, we obtain the upper bound (see \cite{Allan2017}):
\begin{equation}
\begin{split}
\Jfc[eq]{\xdplus[N]} \leq~&\pi/\tau - 0+\alpha_{\Jfc{}}\paren[eq]{ \norm[eq]{\xdplus - \xplus } }. 
\end{split}
\end{equation}
If $\Jfc{\xdplus[N]} \leq \pi$, the warm-start is admissible with $\uswarm{x}\in\admissiblecontrolset{\xdplus}$. Therefore, $\xdplus\in\feasibleset$.
To robustly satisfy the terminal constraint, we therefore request that (cf. \cite{Allan2017}):
\begin{equation}
\begin{split}
&\norm[eq]{\xdplus - \xplus } \leq \gamma_{1}\coloneqq c_{1}(\pi)\coloneqq  \\ 
&\alpha_{\Jfc{}}^{-1}\paren[eq]{\pi -  \min_{\tau\in \mbbR_{\geq 1}} \max \cbraces[eq]{\frac{\pi}{\tau},\pi - \alphaell[eq]{\alpha_{\Jfc{}}^{-1}\paren[eq]{\frac{\pi}{\tau}}} }}.
\end{split}
\label{eq:e_1}
\end{equation}
Notice that if $\pi > 0$, we have that $\gamma_{1} > 0$, otherwise $\gamma_{1} \geq 0$. 
\subsubsection{\textbf{Positive Invariance}}
We again rely on Lemma~\ref{lem:lem_1} and examine the continuous finite horizon cost function for all $x, \xplus \in \levVN{\rho}$ and $\xdplus \in\feasibleset$ with $\alpha_{\JN{}}\in\mcalK_{\infty}$ (see \cite{Allan2017}):
\begin{equation}
\norm[eq]{ \JN[eq]{\xdplus,\uswarm{x}} -\JN[eq]{\xplus,\uswarm{x}}}\leq\alpha_{\JN{}}\paren[eq]{ \norm[eq]{\xdplus - \xplus } }.
\end{equation}

From Prop.~\ref{prop:nominal_stability}, we deduce $\VN{\xplus}\leq \JN{\xplus,\uswarm{x}} \leq \VN{x} - \alphaellnorm{x}$ and obtain (cf. \cite[Chap. 3.2.4]{Rawlings2020}):
\begin{equation}
\VN{\xdplus} \leq \VN{x} - \alphaellnorm{x} + \alpha_{\JN{}}\paren{ \norm{\xdplus - \xplus } }.
\label{eq:cost_decay}
\end{equation}
According to \cite{Allan2017}, if $\alphatwonorm{x}\geq \VN{x}\geq \rho/\tau$, we have that $\norm{x}\geq \alphatwoinv{\rho/\tau}$, such that the upper bound is:
\begin{equation}
\VN{\xdplus} \leq \rho - \alphaell[eq]{\alphatwoinv[eq]{\frac{\rho}{\tau}}}+ \alpha_{\JN{}}\paren[eq]{\norm[eq]{\xdplus - \xplus } }.
\end{equation}
If $\VN{x}< \rho/\tau$, $\VN{\xdplus}$ is upper bounded by (see \cite{Allan2017}):
\begin{equation}
\VN{\xdplus} \leq \rho/\tau - 0 + \alpha_{\JN{}}\paren{ \norm{\xdplus - \xplus } }.
\end{equation}
If $\VN{\xdplus} \leq \rho$, $\levVN{\rho}$ is positive invariant for system~$\eqref{eq:sys_cl_d}$ with  $\xsolos{N_{\os},x,\omegasmu{x}}\in \levVN{\rho}$ for all $x \in \levVN{\rho}$. To ensure positive invariance, we thus require that (cf. \cite{Allan2017}):
\begin{align}
&\norm[eq]{\xdplus - \xplus }  \leq \gamma_{2}\coloneqq c_{2}(\rho) \coloneq \\
&\alpha_{\JN{}}^{-1}\paren[eq]{\rho -  \min_{\tau\in \mbbR_{\geq 1}} \max \cbraces[eq]{\frac{\rho}{\tau},\rho - \alphaell[eq]{\alphatwoinv[eq]{\frac{\rho}{\tau}}}}}.
\nonumber
\label{eq:e_2}
\end{align}
Again, if $\rho>0$, we have that $\gamma_{2} >0$.

For every $\kappa \in (0,\rho)$, $\levVN{\kappa}$  is a positive invariant set for system $\eqref{eq:sys_cl_d}$ if
$\norm[in]{\xdplus - \xplus }  \leq \gamma_{3}\coloneqq c_{2}(\kappa)$ (cf. \cite[Sec. 3.2.4]{Rawlings2020}).

\subsubsection{\textbf{Cost Decay}}
For a cost decay in the asymptotic sense, we want to ensure that $\VN{\xdplus}<\VN{x}$ for $\norm{x}>0$. From~\eqref{eq:cost_decay}, we deduce the requirement (cf. \cite[Chap. 3.2.4]{Rawlings2020}):
\begin{equation}
\norm[eq]{\xdplus - \xplus }  < \alpha_{\JN{}}^{-1}\paren{\alphaellnorm{x} }.
\end{equation}
Notice that for $\norm{x}\to 0$, the perturbation bound also tends to zero. At the boundary of $\levVN{\kappa}$, we have that $\VN{x}=\kappa \leq \alphatwonorm{x}$. 
From this, we obtain $\norm{x} \geq \delta\coloneqq\alphatwoinv{\kappa}$ and the more conservative requirement (cf. \cite[Chap. 3.2.4]{Rawlings2020}):
\begin{equation}
\norm[eq]{\xdplus - \xplus }  \leq \gamma_{4} \coloneqq d(\kappa)\coloneqq\alpha_{\JN{}}^{-1}\paren{\alphaell{\alphatwoinv{\kappa}}}.
\label{eq:cost_decay_bound}
\end{equation}
If $\kappa > 0$, we have that $\gamma_{4}>0$.
\subsubsection{\textbf{Practical Asymptotic Stability}}
To define a proper cost decay of the optimal value function, we must consider the worst-case scenario $\gamma_{4} = \min \cbraces{\gamma_{1},\gamma_{2},\gamma_{3},\gamma_{4}}$. 
Recall that $\alphaell{s} - \alpha_{\JN{}}\paren{\gamma_{4}}> 0$ only holds as long as $s>\delta$, see \eqref{eq:cost_decay} and \eqref{eq:cost_decay_bound}. 
Therefore, we use an infinitesimally small part of the robustness margin $0<\varepsilon \ll \min \cbraces{\gamma_{1},\gamma_{2},\gamma_{3},\gamma_{4}} $ to lift $\alphaell{\delta} - \alpha_{\JN{}}\paren{\gamma_{4}}$ by introducing \mbox{$\gamma \coloneqq \min \cbraces{\gamma_{1},\gamma_{2},\gamma_{3},\gamma_{4}} -\varepsilon>0$}. 
Let us define the cost decay function
\begin{equation}
\alpha_{3}(s)\coloneqq\begin{cases}
\alpha(s) & \mrm{if} ~ s< \delta,\\
\alphaell{s} - \alpha_{\JN{}}\paren{\gamma} &\mrm{if} ~ s\geq \delta,\\
\end{cases}
\end{equation}
where $\alpha \in \mcalK$ shall satisfy $\alpha(\delta)= \alphaell{\delta} - \alpha_{\JN{}}\paren{\gamma}$. 
The specific choice of $\alpha$ is not relevant since we only strive for $\mcalP$-practical asymptotic stability. 
We claim that $\alpha_{3}\in\mcalK_{\infty}$. 

We summarize that the optimal value function $\VN{}$ is a valid Lyapunov function on the set $ \levVN{\rho}\backslash\mathrm{int}(\mcalB_{\delta})$ for system~\eqref{eq:sys_cl_d} with $\mcalB_{\delta}\subseteq\levVN{\kappa}$, where $\mathrm{int}(\mcalB_{\delta})$ is the interior of $\mcalB_{\delta}$. 
The closed ball $\mcalB_{\delta}$ is a subset of $\levVN{\kappa}$ since $\norm{x} = \delta=\alphatwoinv{\kappa}$ in \eqref{eq:cost_decay_bound} is the minimum distance from the origin to the boundary of $\levVN{\kappa}$. 
The control law $\omegasmu{}$ renders the sets $ \levVN{\rho}$ and $\mcalP=\levVN{\kappa}$ positive invariant for system~\eqref{eq:sys_cl_d} if it adheres to the perturbations bounds $\min\{\gamma_{1},\gamma_{2}\}$ and $ \min\{\gamma_{1},\gamma_{3}\}$, respectively.
If we choose $\gamma \coloneqq \min \cbraces{\gamma_{1},\gamma_{2},\gamma_{3},\gamma_{4}} -\varepsilon>0$,
\mbox{$\mcalP$-practical} asymptotic stability follows from the well-known result \cite[Thm. 2.20]{Gruene2017}.
\end{proof}

\end{appendix}

\bibliographystyle{IEEEtran}       
\bibliography{literature.bib}    

\end{document}